\theoremstyle{plain}
\newtheorem{assumption}{Assumption}
\newtheorem{proposition}{Proposition}
\theoremstyle{definition}
\def\EE{\mathbb{E}}
\def\PP{\mathbb{P}}
\def\calA{\mathcal{A}}
\def\calN{\mathcal{N}}
\def\1{\mathbbm{1}}
\def\var{\mathsf{Var}}
\def\cov{\mathsf{Cov}}
\newcommand\independent{\protect\mathpalette{\protect\independenT}{\perp}}
\def\independenT#1#2{\mathrel{\rlap{$#1#2$}\mkern2mu{#1#2}}}
\theoremstyle{plain}
\def \var {\mathsf{Var}}
\def\independenT#1#2{\mathrel{\rlap{$#1#2$}\mkern2mu{#1#2}}}
\definecolor{myblue}{rgb}{.8, .8, 1}
\definecolor{mathblue}{rgb}{0.2472, 0.24, 0.6} 
\definecolor{mathred}{rgb}{0.6, 0.24, 0.442893}
\definecolor{mathyellow}{rgb}{0.6, 0.547014, 0.24}
\newcommand{\RNum}[1]{\uppercase\expandafter{\romannumeral #1\relax}}
\DeclareMathOperator*{\argmax}{argmax}
\DeclareMathOperator*{\argmin}{argmin}
\tikzset{fontscale/.style = {font=\relsize{#1}}}
\begin{document}

\title{Assessment of Heterogeneous Treatment Effect Estimation Accuracy via Matching}
\author{Zijun Gao\thanks{Department of Statistics, Stanford University.}, 
Trevor Hastie\thanks{Department of Statistics and Department of Biomedical Data Science, Stanford University.},  
Robert Tibshirani\footnotemark[2]
}
\maketitle

\begin{abstract}
We study the assessment of the accuracy of heterogeneous treatment effect (HTE) estimation, where the HTE is not directly observable so standard computation of prediction errors is not applicable.
To tackle the difficulty, we propose an assessment approach by constructing pseudo-observations of the HTE based on matching.
Our contributions are three-fold: 
first, we introduce a novel matching distance derived from proximity scores in random forests; 
second, we formulate the matching problem as an average minimum-cost flow problem and provide 
an efficient algorithm;
third, we propose a match-then-split principle for the assessment with cross-validation.
We demonstrate the efficacy of the assessment approach on synthetic data and data generated from a real dataset.
\end{abstract}

\tableofcontents

\section{Introduction}\label{sec:introduction}
Nowadays the heterogeneous treatment effect (HTE) estimation under the Neyman-Rubin potential outcome model \cite{rubin1974estimating,splawa1990application} is gaining increasing popularity due to various practical demands, such as personalized medicine \cite{low2016comparing, lesko2007personalized}, personalized education \cite{murphy2016handbook}, and personalized advertisements \cite{bennett2007netflix}.
There are a number of works focusing on estimating the HTE using various machine learning tools: LASSO \cite{imai2013estimating}, random forests \cite{wager2018estimation}, boosting \cite{powers2018some}, and neural networks \cite{kunzel2018transfer}. 
Despite the vast literature on HTE estimation, evaluating the accuracy of a HTE estimator is in general open. 

There are two major motivations to study the assessment problem. 
First, an assessment approach measures the absolute performance of certain estimator on future data. 
Second, an assessment approach provides guidance for comparing estimators.
Aware that a large proportion of HTE estimators involve hyper-parameters, such as the amount of penalization in LASSO-based estimators, tree sizes in random-forests-based estimators, efficient model selection or tuning methods are ultra-important.

The major difficulty of the HTE assessment is attributed to the ``invisibility" of HTE. 
Standard assessment methods evaluate the performance of a predictor by comparing predictions to observations on a validation dataset.
The approach is valid since the observations are unbiased realizations of the values to be predicted. 
In contrast, in the potential outcome model, an observation is the response of a unit under treatment or control, whereas the HTE to be predicted is the difference of the two. 
Therefore, HTE is not observable and the standard assessment methods can not be applied. 

In this paper, we design a two-step assessment approach. 
In the first step, we match treated and control units and regard the differences in the responses of matched pairs as pseudo-observations of the HTE. 
In the second step, we compare predictions to the pseudo-observations and compute prediction error.
For matching, we propose a distance for a pair of treated unit and control unit based on proximity scores in random forests.
We also introduce a matching method which minimizes the average distance of pairs instead of the more-commonly-used total distance \cite{rosenbaum1991characterization}, and provide an algorithm adapted from the average minimum-cost flow problem.

For conducting the assessment approach with cross-validation, we recommend a match-then-split principle.
Explicitly, we first perform matching on the complete dataset, then split the matched pairs into different folds. 
Since the quality of matched pairs deteriorates as the sample size decreases, the pairs constructed by matching first consist of units more similar than those obtained by splitting first and matching within each fold. 
We remark that matching before splitting does not snoop the data thanks to the distance used.

The organization of the paper is as follows. 
In Section \ref{sec:background}, we introduce the background of the HTE assessment and discuss related works.
In Section \ref{sec:holdOut}, we introduce the assessment approach with a hold-out validation dataset.
In Section \ref{sec:crossValidation}, we discuss how to implement the assessment approach in the framework of cross-validation.
In Section \ref{sec:extension}, we extend the assessment approach to handle various types of responses.
In Section \ref{sec:simulation}, we compare several assessment approaches on synthetic data and data simulated from a real dataset.
In Section \ref{sec:discussion}, we discuss directions of future work.

\section{Background} \label{sec:background}

\subsection{Potential outcome model}
We consider the Neyman-Rubin potential outcome model with two treatment assignments, labeled as ``treatment" and ``control". 
We assume that there is an underlying population and observations are identically independent realizations. 
Explicitly, for unit $i$, there is a $p$ dimensional covariate vector $X_i$ sampled i.i.d. from an underlying distribution $\PP$. 
Given covariates $X_i$, a binary group assignment $W_i \in \{0,1\}$ is generated from the Bernoulli distribution with mean $e(X_i)$ (i.e. the propensity score). 
Unit $i$ is also associated with two potential outcomes $Y_i(0)$, $Y_i(1)$, where $Y_i(1)$ is observed if the unit is under treatment, and $Y_i(0)$ is observed if the unit is under control.
We assume the following models of potential outcomes
\begin{align*}
Y_i(1) | X_i &= \nu(X_i) + \varepsilon_i, \\ 
Y_i(0) | X_i &= \mu(X_i) + \varepsilon_i, 
\end{align*}
where $\nu(x)$ is the treatment group mean function, $\mu(x)$ is the control group mean function, $\varepsilon_i$ is some mean zero noise independent of $X_i$, $W_i$. 
We define HTE as the difference of group mean functions, that is $\tau(x) := \nu(x) - \mu(x)$. 
We summarize the data generation model as follows,
\begin{align}
\label{eq:model}
\begin{split}
    X_i &\stackrel{\text{iid}}{\sim} \PP, \\
    W_i | X_i &\stackrel{}{\sim} \text{Ber}(e(X_i)), \\
    Y_i | W_i, X_i &= \mu(X_i) + W_i \tau(X_i) + \varepsilon_i,
\end{split}
\end{align}

By considering model \eqref{eq:model}, we have implicitly made the following assumptions as in \cite{imbens2015causal}.

\begin{assumption}[Unconfoundedness]\label{assu:confound}
The assignment mechanism does not depend on potential outcomes:
\begin{align*}
    \left(Y_i^{(1)}, Y_i^{(0)}\right) \independent W_i~|~X_i.
\end{align*}
\end{assumption}

\begin{assumption}[Stable unit treatment value assumption]\label{assu:SUTVA}
The potential outcomes for any unit do not depend on the treatments assigned to other units. 
There are no different versions of each treatment level.
\end{assumption}


\subsection{Matching}
Assume that there are $n$ units in total: $n_t$ treated units $\{t_i\}_{1 \le i \le n_t}$ and $n_c$ control units $\{c_j\}_{1 \le j \le n_c}$
We define a match $\pi$ as a function from treated units to the subsets of control units. 
Let $\Pi$ be the associated set of matched pairs
\begin{align*}
    \Pi := \{(t_i, c_j): c_j \in \pi(t_i)\},
\end{align*}
and denote the number of pairs in set $\Pi$ as $|\Pi|$.
Note that there is a bijection between matches and sets of matched pairs, and we use two notations exchangeably. 
We define the multiplicity number of the treatment group in match $\pi$ as
\begin{align*}
    M_t^{\pi} := \max_{t_i} \sum_{c_j} \1_{\{c_j \in \pi(t_i)\}},
\end{align*}
and similarly we define $M_c^{\pi}$.
Let $d_{t_i, c_j}$ be a distance defined for each treatment-control pair $(t_i, c_j)$.
We denote the total distance and the average distance of a match $\pi$ under the distance $d_{t_i, c_j}$ by
\begin{align*}
    D_{\text{tot}}(\pi)
    := \sum_{t_i} \sum_{c_j \in \pi(t_i)} d_{t_i, c_j}, \quad
    D_{\text{ave}}(\pi)
    := \frac{D_{\text{tot}}(\pi)}{|\Pi|}.
\end{align*}

There is a fruitful literature of matching methods applied to causal inference problems. 
Generally, a matching method consists of two parts: matching distance and matching structure. 
Matching distance describes dissimilarity between a pair of units, such as covariate distance, propensity score difference.
Matching structure characterizes the skeleton of a match, such as pair matching, subset matching and full matching. See \cite{rosenbaum2019modern} and references therein for a detailed review of matching methods. 


\subsection{Related works}
In the literature of HTE, most works perform accuracy assessment by predicting the responses as follows.
On the training data, the treatment and control group mean functions are estimated, where the difference of two group mean functions are used as the HTE estimator. 
On the validation data, prediction errors of group mean functions are computed and used for assessing the accuracy of the corresponding HTE estimator. 
The issue of the method is that large prediction errors of group mean functions do not ruin out accurate HTE estimation, or
the estimators for mean group functions may of poor quality while the difference is still a reasonably good predictor of the HTE.
This may happen when the HTE enjoys better properties compared to the mean group functions, such as higher sparsity or smoothness \cite{Kunzel4156}. 
Moreover, if a HTE estimator comes without estimates for the mean group functions, predicting the response can not be carried out.

In \cite{athey2015machine}, an assessment method based on covariate matching is proposed. 
Each unit in the validation data is paired with a unit in the opposite treatment status and close with regard to covariates. 
Along this line, a pseudo-observation of HTE is obtained for each pair by taking the difference of the responses, and from here the standard prediction error computation can be applied. 
The method makes considerable progress in avoiding estimating the control group mean function, but is limited to the case where the dimension of covariates is not too large.

In \cite{athey2016recursive}, honest validation is proposed for causal recursive partitioning.  
Given a trained tree structure, honest validation compares the estimated values at each terminal node based on the training data and the validation data. 
The method cleverly utilizes the homogeneity of HTE at each terminal node, but it is not obvious how to generalize the method to other HTE estimators. 

We finally review two assessment methods for average treatment effect (ATE) estimations.
Synth-validation in \cite{schuler2017synth} generates synthetic data based on the observed data with a sequence of possible ATEs and evaluates the performance of ATE estimators by comparing them to the known effects. 
The approach can not be easily extended to HTE evaluation since the number of possible configurations of HTE increases exponentially with regard to the covariate dimension.
Another approach called within-study comparison in \cite{cook2008three} contrasts ATE estimators from observational studies with those from randomized experiments. 
The approach is not as effective for assessing HTE estimators due to the small sample size in each heterogeneity subgroup of randomized experiments.

\section{Assessment with hold-out validation dataset}\label{sec:holdOut}

\subsection{General framework}
In this section, we consider the HTE assessment with a hold-out validation dataset. 
We consider the following validation error for a HTE estimator $\hat{\tau}(x)$
\begin{align}
\label{eq:idealValidationError}
    \text{error}_{\text{ideal}}
    = \frac{1}{n_t}\sum_{t_i} \left(\tau_{t_i} - \hat{\tau}\left(X_{t_i}\right)\right)^2.
\end{align}
In the ideal world, for each treated unit, there is an identical copy that goes under control, and we replace $\tau_{t_i}$ in \eqref{eq:idealValidationError} by the differences in the responses.
In the real world, no identical copy exists, and thus we construct a match $\pi$ between treated units and control units. 
We regard the differences in responses as pseudo-observations of the HTE for the treated,
and estimate the ideal validation error in \eqref{eq:idealValidationError} by
\begin{align}
\label{eq:validationErrorEstimator}
    \widehat{\text{error}_{\pi}}
    = \frac{1}{|\Pi|}\sum_{(t_i, c_j) \in \Pi}\left(Y_{t_i} - Y_{c_j} - \hat{\tau}\left(X_{t_i}\right)\right)^2.
\end{align}

The proposition below characterizes the bias and variance of the validation error estimator $\widehat{\text{error}_{\pi}}$ conditioned on covariates and treatment assignments.
Define the oracle validation error for a match $\pi$ as 
\begin{align}
\label{eq:validationError}
    \text{error}_{\pi}
    = \frac{1}{|\Pi|}\sum_{(t_i, c_j) \in \Pi}\left(\tau_{t_i} - \hat{\tau}\left(X_{t_i}\right)\right)^2.
\end{align}
The oracle validation error and the validation error estimator are equal if the match $\pi$ is perfect and the potential outcomes are noiseless.
For a treated unit $t_i$ and a control unit $c_j$, define the difference in control group mean function values as $b_{t_i, c_j} = \mu(X_{t_i}) - \mu(X_{c_j})$. 
For a match $\pi$, define the mean squared differences in control group mean function values as $\overline{b^2_{\pi}} = \frac{1}{|\Pi|}\sum_{(t_i, c_j) \in \Pi} b_{t_i, c_j}^2$. 
\begin{proposition}
\label{prop:validationError}
Assuming model \eqref{eq:model}, $\var(\varepsilon) = \sigma^2$, $\var(\varepsilon^2) = \kappa \sigma^4$, we have
\begin{align*}
    \left(1 - \sqrt{\frac{\overline{b^2_{\pi}}}{\text{error}_{\pi}}}\right)^2
    &\le \frac{\EE\left[\widehat{\text{error}_{\pi}}\right]  - 2 \sigma^2}{\text{error}_{\pi}}
    \le \left(1 + \sqrt{\frac{\overline{b^2_{\pi}}}{\text{error}_{\pi}}}\right)^2,\\
    \var\left(\widehat{\text{error}_{\pi}}\right)
    &\le 
   \frac{M_{t}^{\pi} + M_{c}^{\pi} -1}{|\Pi|} \left((4\kappa +8) \sigma^4 + 32  \sigma^2 \left(\overline{b^2_{\pi}} + \text{error}_{\pi}\right)\right).
\end{align*}
\end{proposition}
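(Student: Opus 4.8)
The plan is to condition throughout on the covariates $\{X_{t_i}\},\{X_{c_j}\}$ and the treatment assignments, so that the only randomness is in the noise $\{\varepsilon\}$. Under model \eqref{eq:model} a treated response is $Y_{t_i}=\mu(X_{t_i})+\tau_{t_i}+\varepsilon_{t_i}$ and a control response is $Y_{c_j}=\mu(X_{c_j})+\varepsilon_{c_j}$, so the summand of \eqref{eq:validationErrorEstimator} for a matched pair becomes $Y_{t_i}-Y_{c_j}-\hat\tau(X_{t_i})=a_{t_i,c_j}+\eta_{t_i,c_j}$, where $a_{t_i,c_j}:=\big(\tau_{t_i}-\hat\tau(X_{t_i})\big)+b_{t_i,c_j}$ is deterministic and $\eta_{t_i,c_j}:=\varepsilon_{t_i}-\varepsilon_{c_j}$ is mean-zero with $\var(\eta_{t_i,c_j})=2\sigma^2$. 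Writing $r_{t_i}:=\tau_{t_i}-\hat\tau(X_{t_i})$, so that $\text{error}_{\pi}=|\Pi|^{-1}\sum r_{t_i}^2$ and $\overline{b^2_{\pi}}=|\Pi|^{-1}\sum b_{t_i,c_j}^2$, the whole statement reduces to moment computations for $(a+\eta)^2$.

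For the bias I take the conditional expectation term by term. Since $\eta$ is mean-zero and deterministically independent of $a$, $\EE[(a_{t_i,c_j}+\eta_{t_i,c_j})^2]=a_{t_i,c_j}^2+2\sigma^2$, hence $\EE[\widehat{\text{error}_{\pi}}]-2\sigma^2=|\Pi|^{-1}\sum a_{t_i,c_j}^2$. The ratio in the statement is then $\frac{\sum(r_{t_i}+b_{t_i,c_j})^2}{\sum r_{t_i}^2}=\|r+b\|_2^2/\|r\|_2^2$ for the vectors $r,b$ indexed by $\Pi$. Applying the triangle inequality $\|r+b\|_2\le\|r\|_2+\|b\|_2$ together with its reverse $\|r+b\|_2\ge\big|\,\|r\|_2-\|b\|_2\,\big|$, squaring, and dividing by $\|r\|_2^2$ gives the two-sided bound once I identify $\|b\|_2/\|r\|_2=\sqrt{\overline{b^2_{\pi}}/\text{error}_{\pi}}$.

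For the variance I expand $\var(\widehat{\text{error}_{\pi}})=|\Pi|^{-2}\sum_{p,q}\cov(Z_p,Z_q)$ with $Z_p:=(a_p+\eta_p)^2$. The crucial structural fact is that $\cov(Z_p,Z_q)=0$ whenever pairs $p$ and $q$ share no unit, since then $\eta_p,\eta_q$ involve disjoint independent noises. A direct fourth-moment computation, using that $\eta_p$ is a difference of two i.i.d. noises (hence symmetric, so $\EE[\eta_p^3]=0$) and the hypothesis $\var(\varepsilon^2)=\kappa\sigma^4$ (equivalently $\EE[\varepsilon^4]=(\kappa+1)\sigma^4$), yields $\var(Z_p)=8\sigma^2 a_p^2+(2\kappa+4)\sigma^4$. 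For the off-diagonal terms I bound $|\cov(Z_p,Z_q)|\le\tfrac12(\var Z_p+\var Z_q)$ by Cauchy--Schwarz and AM--GM; this is exactly what avoids the odd cross-moments $\EE[\varepsilon^3]$, which otherwise appear in the explicit covariances of pairs sharing a unit and need not vanish. The counting step produces the prefactor: a fixed treated unit lies in at most $M_t^{\pi}$ pairs and a fixed control unit in at most $M_c^{\pi}$ pairs, so each pair shares a unit with at most $M_t^{\pi}+M_c^{\pi}-1$ pairs (itself included). Summing the AM--GM bound over this dependency graph gives $\var(\sum_p Z_p)\le(M_t^{\pi}+M_c^{\pi}-1)\sum_p\var(Z_p)$, and finally $\sum_p a_p^2\le 2\sum_p(r_{t_i}^2+b_{t_i,c_j}^2)=2|\Pi|(\text{error}_{\pi}+\overline{b^2_{\pi}})$. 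Collecting constants yields a bound of exactly the stated form, with the Cauchy--Schwarz and $(r+b)^2\le2(r^2+b^2)$ steps contributing the slack reflected in the constants.

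The main obstacle is the variance, specifically the cross-pair dependence. Two points require care: first, the per-pair variance needs the fourth-moment hypothesis together with the vanishing of the third moment of the symmetric $\eta_p$; second, and more delicate, the explicit covariances between pairs sharing a unit carry $\EE[\varepsilon^3]$ terms that do \emph{not} cancel without a symmetry assumption on $\varepsilon$, so the argument must route through the Cauchy--Schwarz/AM--GM bound rather than through exact covariances. The combinatorial translation of the multiplicities $M_t^{\pi},M_c^{\pi}$ into the dependency-graph degree $M_t^{\pi}+M_c^{\pi}-1$ is what generates the leading factor in the stated bound.
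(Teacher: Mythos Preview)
Your proposal is correct and follows essentially the same route as the paper's proof: the bias argument via $\|r+b\|_2^2/\|r\|_2^2$ and the triangle inequality is equivalent to the paper's Cauchy--Schwarz on the cross term, and for the variance both arguments use the same dependency-graph counting (neighborhood size $\le M_t^\pi+M_c^\pi-2$) together with $|\cov(Z_p,Z_q)|\le\tfrac12(\var Z_p+\var Z_q)$. The one minor difference is that you compute $\var(Z_p)=8\sigma^2 a_p^2+(2\kappa+4)\sigma^4$ exactly via $\EE[\eta_p^3]=0$, whereas the paper bounds $\var(2a_p\eta_p+\eta_p^2)\le 2\var(2a_p\eta_p)+2\var(\eta_p^2)$; this costs the paper an extra factor of $2$ in the constants, so your calculation in fact yields the sharper bound $\frac{M_t^\pi+M_c^\pi-1}{|\Pi|}\big((2\kappa+4)\sigma^4+16\sigma^2(\overline{b^2_\pi}+\text{error}_\pi)\big)$, which of course implies the stated one.
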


Proposition \ref{prop:validationError} implies that a smaller $\overline{b^2_{\pi}}$ will result in smaller upper bounds for both the bias and variance of $\widehat{\text{error}_{\pi}}$. 
Besides, a larger $|\Pi|$ and smaller multiplicity numbers $M_t^{\pi}$, $M_c^{\pi}$ will induce a smaller upper bound for variance. 
We design a matching method based on the two observations.

\subsection{Matching distance}
Motivated by Proposition \ref{prop:validationError}, we match treated and control units with similar control group mean function values. 
The following steps are conducted on the validation dataset.
First, we learn the control group mean function via random forests using the control units.
Based on the random forest, we compute for each pair of treated unit and control unit a proximity score: the number of trees that the two units end up in the same terminal node.
We define the proximity score distance by subtracting the proximity score from the total number of trees. 
The proximity score distance is a pseudo-metric, and a pair of treated and control unit with small proximity score distance is close with regard to the control group mean function value in the eye of the random forest.

We compare the proximity score distance with other popular matching distances.
Propensity score distances are of little relevance here, because two units similar in the control group mean function value are not necessarily close in the propensity score, and vice versa.
Exact covariate matching is ideal but usually unrealistic.
Distances based solely on covariates usually treat covariates equally, and is inefficient when only a small proportion of the covariates are informative to the control group mean function.

Distances based on estimated control mean group functions serve for our goal, but rely more heavily on accurate estimates and are less robust to outliers.  
As Figure \ref{fig:proximityScoreCSEstimation} shows, matching on distances based on estimated control mean group functions may pair units far apart in the covariates influential to the control group mean function as in panel (a), while matching on the proximity score distance will result in pairs with close estimated control mean group functions as well as similar influential covariates as in panel (b). 
When the estimates are not accurate, pairs with similar covariates are more likely to stay close in the control group mean function value.
Besides, the proximity scores only depend on the tree structure, while the estimates also depend on the responses at each terminal node, and thus suffer more from outliers.

\begin{figure}[]
    \centering
    \begin{minipage}{7cm}
\includegraphics[scale = 0.37]{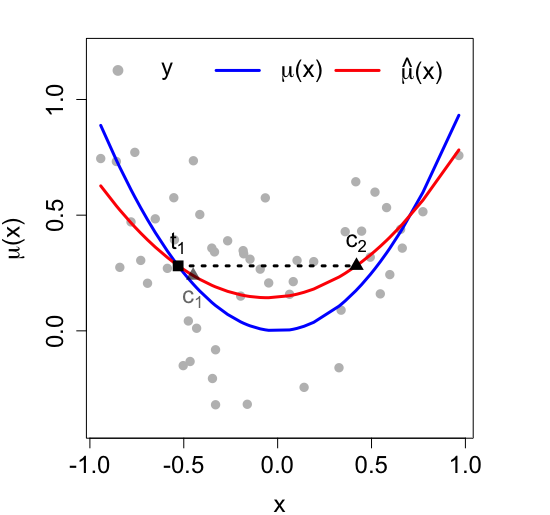}
    \subcaption*{\textit{\quad (a) pair $(t_1, c_2)$ favored by the distance 
    		\\\textcolor{white}{blank} based on the estimated control mean \\\textcolor{white}{blank} group function}}
    \end{minipage}
    \begin{minipage}{7cm}
\includegraphics[scale = 0.37]{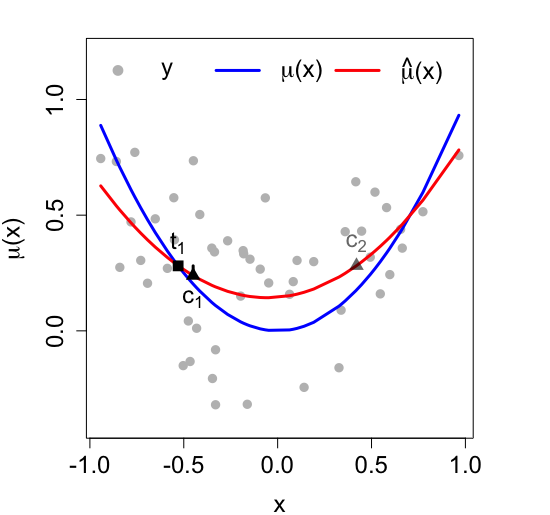}
    \subcaption*{\textit{\quad(b) pair $(t_1, c_1)$ favored by the proximity \\\textcolor{white}{blank} score distance \\\textcolor{white}{blank}}}
    \end{minipage}
    \caption{\textit{Comparison of the proximity score distance and the distance based on estimated control mean group functions.
    The blue curves are the true control group mean function, grey points are observations, and the red curves are the estimated control group mean function via least squares. 
    There are two candidate control units $c_1$ and $c_2$ to pair with the treated unit $t_1$, and $c_1$ is closer with regard to the control group mean function value, i.e. $\left|\mu(x_{t_1}) - \mu(x_{c_1})\right| < \left|\mu(x_{t_1}) - \mu(x_{c_2})\right|$.
    In the left panel, the distance based on the estimated control mean group function will prefer $c_2$ since $\hat{\mu}(x_{c_2}) = \hat{\mu}(x_{t_1})$.
    In the right panel, the proximity score distance will prefer $c_1$ since there is likely to be a split between $x_{t_1}$ and $x_{c_2}$, and thus $t_1$ and $c_2$ will end up in different terminal nodes leading to a large proximity score distance. 
    In the example, the proximity score distance finds the pair with closer control group mean function values.}
    }
    \label{fig:proximityScoreCSEstimation}
\end{figure}

\subsection{Matching structure}
Given a distance $d$ that captures the differences in the control group mean function values, by Proposition \ref{prop:validationError}, we aim to find a match in which $(1)$ paired control units and treated units are close regarding the provided distance; $(2)$ as many units as possible are used; $(3)$ no units are overused.

To illustrate, we consider the example in Figure \ref{fig:toyExample}. 
There are two equal-sized clusters $G_1$, $G_2$, where units in the same cluster share similar covariates and units not from the same clusters differ in covariates. 
As a result, control group mean function values are similar within clusters but different across clusters. 
Further assume that the units in cluster $G_2$ are more likely to be treated, and thus cluster $G_2$ has more treated units while cluster $G_1$ has more control units.
We remark that the example is motivated by the confounding phenomena in observational study: propensity score and baseline functions are influenced by the same covariates (i.e. confounders) known or unknown.
If the samples are clustered according to the confounder values, control group mean function values and proportions of treated units are different across clusters.

As depicted in Figure \ref{fig:toyExample},
there are three match candidates: 
in panel (a) each treated unit is matched to exactly one control unit and all the units are used, but there are undesirable matches across clusters; 
in panel (b) one-to-one matching is conducted and no pairs consist of units from different clusters, but part of the control units and treated units are not matched; 
in panel (c) there are no across-cluster pairs, every unit is matched, the treated units in cluster $G_1$ are used twice and similarly for the control units in cluster $G_2$.
Among the three matches, panel (c) satisfies the three properties aforementioned and is the most favorable candidate.

\tikzstyle{vertex}=[draw=black!25,shape=circle,fill=black!25, edge=black!25,minimum size=2pt,inner sep=0pt]
\tikzstyle{vertexTreated}=[draw=red!25,shape=circle,fill=red!75, edge=red!100, text = black, fontscale = 2, minimum size=20pt,inner sep=0pt]
\tikzstyle{vertexTreatedRemoved}=[draw=red!25,shape=circle,fill=red!75, edge=red!100, text = black, fontscale = 2, minimum size=20pt,inner sep=0pt, fill opacity = 0.5]
\tikzstyle{vertexControl}=[draw=blue!25,shape=circle,fill=blue!75, edge=blue!100, text = white, fontscale = 2, minimum size=20pt,inner sep=0pt]
\tikzstyle{vertexControlRemoved}=[draw=blue!25,shape=circle,fill=blue!75, edge=blue!100, text = white, fontscale = 2, minimum size=20pt,inner sep=0pt, fill opacity = 0.75]
\tikzstyle{vertexInvisible}=[draw=white!25,shape=circle,fill=white!75, edge=white!100, text = black, fontscale = 2, minimum size=20pt,inner sep=0pt, fill opacity = 1]
\tikzstyle{edge} = [draw,thick,-]
\tikzstyle{edgeRemove} = [draw,thick,-]
\tikzstyle{edgeAlert} = [draw,thick,-]
\tikzstyle{arrow} =[draw = darkgray,thick,-]
\tikzstyle{arrowRemove} = [draw = gray, thick,-, draw opacity = 0.8]
\tikzstyle{weight} = [font=\scriptsize]

\begin{figure}[]
\centering
\begin{tikzpicture}[scale=1.6,auto,swap]
    \foreach \pos /\name in {{(-0.6,0)}/G_1,{(-0.6,1.5)}/G_2}
        \node[vertexInvisible](\name) at \pos{$\name$};
    \foreach \pos /\name in {{(0,0)}/t_3,{(0,1)}/t_2,{(0,2)}/t_1}
        \node[vertexTreated](\name) at \pos{$\name$};
    \foreach \pos /\name in {{(1.5,0)}/c_3,{(1.5,1)}/c_2, {(1.5,2)}/c_1}
        \node[vertexControl](\name) at \pos{$\name$};
    \node[draw = white, fill = white, fill opacity = 0, edge = white, draw opacity = 0, text opacity = 1] at (0.4,-0.05) {};
    \node[draw = white, fill = white, fill opacity = 0, edge = white, draw opacity = 0, text opacity = 1] at (0.45,1) {};
    \node[draw = white, fill = white, fill opacity = 0, edge = white, draw opacity = 0, text opacity = 1] at (0.4,2.1) {};
    \foreach \source /\dest  in {t_1/c_1,t_3/c_3} 
        \path[arrow] (\source) -- node[weight] {} (\dest);
    \foreach \source /\dest  in {t_2/c_2} 
    \path[arrowRemove] (\source) -- node[weight] {} (\dest);
    \draw [rounded corners,fill opacity = 0, dashed, draw = black] (-0.3,-0.3)--(-0.3,0.2)--(1.8,1.6)--(1.8,-0.3)--cycle;
    \draw [rounded corners,fill opacity = 0, dashed, draw = black] (-0.3,2.3)--(-0.3,0.4)--(1.8,1.8)--(1.8,2.3)--cycle;
    \node[draw,text width=3cm, draw opacity = 0] at (0.75,-0.75) {\centering \textit{(a) undesirable}}; 
    \node[draw,text width=3cm, draw opacity = 0] at (1.15,-1) {\centering \textit{pair $(t_2, c_2)$}}; 
    \foreach \pos /\name in {{(3,0)}/t_3,{(3,2)}/t_1}
        \node[vertexTreated](\name) at \pos{$\name$};
    \foreach \pos /\name in {{(3,1)}/t_2}
        \node[vertexTreatedRemoved](\name) at \pos{$\name$};
    \foreach \pos /\name in {{(4.5,0)}/c_3, {(4.5,2)}/c_1}
        \node[vertexControl](\name) at \pos{$\name$};
    \foreach \pos /\name in {{(4.5,1)}/c_2}
        \node[vertexControlRemoved](\name) at \pos{$\name$};
    \node[draw = white, fill = white, fill opacity = 0, edge = white, draw opacity = 0, text opacity = 1] at (0.4,-0.05){};
    \node[draw = white, fill = white, fill opacity = 0, edge = white, draw opacity = 0, text opacity = 1] at (0.45,1){};
    \node[draw = white, fill = white, fill opacity = 0, edge = white, draw opacity = 0, text opacity = 1] at (0.4,2.1){};
    \foreach \source /\dest  in {t_1/c_1,t_3/c_3} 
        \path[arrow] (\source) -- node[weight] {} (\dest);
    \draw [rounded corners,fill opacity = 0, dashed, draw = black] (2.7,-0.3)--(2.7,0.2)--(4.8,1.6)--(4.8,-0.3)--cycle;
    \draw [rounded corners,fill opacity = 0, dashed, draw = black] (2.7,2.3)--(2.7,0.4)--(4.8,1.8)--(4.8,2.3)--cycle;
    \node[draw,text width=2.5cm,draw opacity = 0] at (2.25,2.75) {$M_t = M_c = 1$};
    \node[draw,text width=3.75cm, draw opacity = 0] at (4,-0.75) {\centering \textit{(b) unused units}}; 
    \node[draw,text width=3.75cm, draw opacity = 0] at (4.4,-1) {\centering \textit{$t_2$, $c_2$}}; 
    \foreach \pos /\name in {{(6,0)}/t_3,{(6,1)}/t_2,{(6,2)}/t_1}
        \node[vertexTreated](\name) at \pos{$\name$};
    \foreach \pos /\name in {{(7.5,0)}/c_3,{(7.5,1)}/c_2, {(7.5,2)}/c_1}
        \node[vertexControl](\name) at \pos{$\name$};
    \node[draw = white, fill = white, fill opacity = 0, edge = white, draw opacity = 0, text opacity = 1] at (0.4,-0.05){};
    \node[draw = white, fill = white, fill opacity = 0, edge = white, draw opacity = 0, text opacity = 1] at (0.45,1){};
    \node[draw = white, fill = white, fill opacity = 0, edge = white, draw opacity = 0, text opacity = 1] at (0.4,2.1){};
    \foreach \source /\dest  in {t_1/c_1,t_2/c_1,t_3/c_2,t_3/c_3} 
        \path[arrow] (\source) -- node[weight] {} (\dest);
    \draw [rounded corners,fill opacity = 0, dashed, draw = black] (5.7,-0.3)--(5.7,0.2)--(7.8,1.6)--(7.8,-0.3)--cycle;
    \draw [rounded corners,fill opacity = 0, dashed, draw = black] (5.7,2.3)--(5.7,0.4)--(7.8,1.8)--(7.8,2.3)--cycle;
    \draw [rounded corners,fill opacity = 0, draw = black] (5.25,-0.5)--(5.25,3)--cycle;
    \node[draw,text width=2.5cm,draw opacity = 0] at (6.75,2.75) {$M_t = M_c = 2$};
    \node[draw,text width=2cm,draw opacity = 0] at (6.75,-0.75) {\textit{\centering (c) desired}};
    \node[draw,text width=2cm,draw opacity = 0] at (7.15,-1) {\centering \textit{match}};
\end{tikzpicture}
\caption{\textit{Example for matching structure. 
There are two equal-sized clusters $G_1$, $G_2$, where units in the same cluster share similar covariates and units not from the same clusters differ in covariates. 
Control group mean function values are similar within clusters but different across clusters. Cluster $G_2$ has more treated units while cluster $G_1$ has more control units.
In (a), (b) $M_t= M_c = 1$, and in panel (c) $M_t = M_c = 2$.}}
\label{fig:toyExample}
\end{figure}
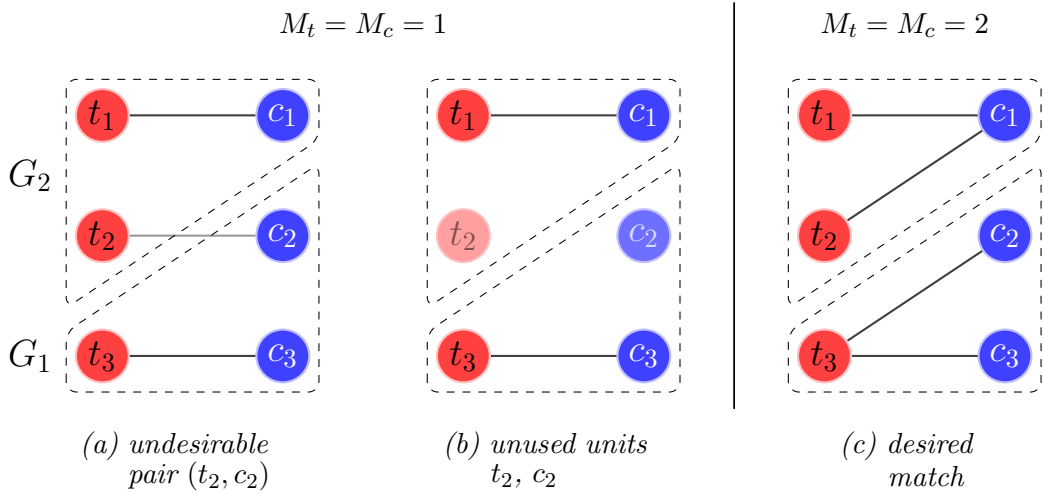

To find a match with the desired properties, such as panel (c) in the example in Figure \ref{fig:toyExample}, 
we propose the following matching objective 
\begin{align}
\label{eq:averageDistance}
    &\min_{\pi}~ D_{\text{ave}}(\pi)\\
\label{eq:groupControl}
    &m_c \le \sum_{t_i} \1_{\{c_j \in \pi(t_i)\}}  \le M_c, \quad \forall c_j,\\
\label{eq:groupTreatment}
    &m_t \le \sum_{c_j} \1_{\{c_j \in \pi(t_i)\}} \le M_t, \quad \forall t_i,
\end{align}
with pre-specified $m_c$, $m_t$, $M_c$, $M_t \ge 0$. 
The lower bounds in the multiplicity constraints \eqref{eq:groupControl},  \eqref{eq:groupTreatment} guarantee that as many units are used as possible.
The upper bounds in the multiplicity constraints \eqref{eq:groupControl}, \eqref{eq:groupTreatment} enforce that no units are matched excessively.
The objective function \eqref{eq:averageDistance}, focusing on the average distance, prefers a match with more good quality pairs to fewer poor quality pairs. 
Particularly for the example in Figure \ref{fig:toyExample}, the total distance minimization may rule out panel (c) since the total distance of many good quality pairs can be larger than that of fewer poor quality pairs, while the average distance always favors the former.
We discuss the multiplicity constraints \eqref{eq:groupControl}, \eqref{eq:groupTreatment} and the objective function \eqref{eq:averageDistance} in detail.

\subsubsection{Multiplicity constraints}
Arguably the most common multiplicity parameters are $M_t = M_c = 1$, and $m_t = 1$, $m_c = 0$. The constraint requires each treated unit be matched to one control unit and no control units are used multiple times. The constraint can be stringent if multiple control units are close to one treated unit and vice versa.
Consider the example in Figure \ref{fig:toyExample}.
If $M_t = M_c = 1$, $m_t = 1$ are enforced, a proportion of control units in cluster $G_1$ will be matched to treated units in cluster $G_2$ unfavorably as in panel (a).
If we relax $m_t = 1$ and avoid pairs across clusters, part of the control units in cluster $G_1$ and part of the treated units in cluster $G_2$ will not be matched as in panel (b), which reduces the efficiency of data usage. 
In contrast, consider $M_t = M_c = 2$, $m_t = m_c = 1$, where we allow treated units in cluster $G_1$ be matched to multiple control units, and the same for control units in cluster $G_2$. 
As depicted in panel (c), the match contains no pairs of units from different clusters, uses all the data and thus is more desirable compared to panel (a) and (b). 
In practice, we recommend $m_t = m_c = 1$ unless certain units are apparently outliers. 
For $M_t$, the matching method is more sensitive to small upper bounds than to large upper bounds, and thus we recommend to set $M_t$ reasonably large --- larger than the ratio of treated units over control units given any covariate value, and similarly for $M_c$.

\subsubsection{Objective function}
The objective function \eqref{eq:averageDistance} focuses on the average distance instead of the more commonly used total distance.
If the number of matched pairs is fixed, for instance, at the number of treated units, the total distance minimization and the average distance minimization are equivalent.
However, when the number of matched pairs is not fixed, the average distance minimization and the total distance minimization may favor different matches.

The following proposition further illustrates the differences between the average distance minimization and the total distance minimization.
\begin{proposition}
\label{prop:optimizationTarget}
If the optimization problem \eqref{eq:averageDistance} is feasible,
\begin{enumerate}
    \item the average distance minimization is invariant to the scale and the translation of distance, and the total distance minimization is invariant to the scale but not the translation of distance;
    \item given multiplicity parameters $M_t$, $M_c$, $m_t$, $m_c$ and let $\pi_{\text{ave}}$ and $\pi_{\text{tot}}$ denote an optimal solution of the average distance minimization and the total distance minimization respectively, then
    \begin{align*}
        D_{\text{ave}}(\pi_{\text{ave}})
        \le D_{\text{ave}}(\pi_{\text{tot}}),
        \quad \left|\Pi_{\text{ave}} \right|\ge \left|\Pi_{\text{tot}}\right|.
    \end{align*}
\end{enumerate}
\end{proposition}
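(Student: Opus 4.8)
The plan is to prove the two parts separately: the invariance statements by direct substitution into the definitions of $D_{\text{tot}}$ and $D_{\text{ave}}$, and the comparison of optimal solutions by a short chain of inequalities extracted from the two optimality conditions.

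For the first part, I would simply transform the distance and read off what happens to each objective. Under a positive rescaling $d_{t_i,c_j}\mapsto \alpha\, d_{t_i,c_j}$ with $\alpha>0$, both $D_{\text{tot}}(\pi)$ and $D_{\text{ave}}(\pi)$ are multiplied by $\alpha$ for \emph{every} feasible $\pi$, so the set of minimizers of each objective is unchanged, giving scale invariance for both criteria. Under a translation $d_{t_i,c_j}\mapsto d_{t_i,c_j}+\beta$, the average objective becomes $D_{\text{ave}}(\pi)+\beta$, a constant shift independent of $\pi$ that preserves the minimizer, whereas the total objective becomes $D_{\text{tot}}(\pi)+\beta\,|\Pi|$, whose added term varies with the number of pairs. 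The algebra already shows the shift is $\pi$-dependent; to certify genuine non-invariance I would exhibit a minimal instance with two candidate matches of different cardinalities whose ordering under $D_{\text{tot}}$ reverses once $\beta$ is added.

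For the second part, introduce for $\bullet\in\{\text{ave},\text{tot}\}$ the abbreviations $T_\bullet=D_{\text{tot}}(\pi_\bullet)$, $N_\bullet=|\Pi_\bullet|$, and $A_\bullet=D_{\text{ave}}(\pi_\bullet)=T_\bullet/N_\bullet$. The first inequality $A_{\text{ave}}\le A_{\text{tot}}$ is immediate, since $\pi_{\text{ave}}$ minimizes the average objective over the same feasible set that contains $\pi_{\text{tot}}$. The two optimality statements also give $T_{\text{tot}}\le T_{\text{ave}}$, because $\pi_{\text{tot}}$ minimizes total distance and $\pi_{\text{ave}}$ is feasible. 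I would then chain these as
\begin{align*}
N_{\text{ave}} = \frac{T_{\text{ave}}}{A_{\text{ave}}} \ge \frac{T_{\text{tot}}}{A_{\text{ave}}} \ge \frac{T_{\text{tot}}}{A_{\text{tot}}} = N_{\text{tot}},
\end{align*}
where the first step uses $T_{\text{ave}}\ge T_{\text{tot}}$ and the second uses $A_{\text{ave}}\le A_{\text{tot}}$ together with $T_{\text{tot}}\ge 0$.

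The main obstacle is justifying the division by $A_{\text{ave}}$ in this chain, which is legitimate only when $A_{\text{ave}}>0$. In the degenerate case $A_{\text{ave}}=0$, the first inequality together with non-negativity of the distance forces $A_{\text{tot}}=0$ as well, so every pair of both matches has zero distance and the two matches are simultaneously optimal for both objectives; here the cardinality inequality holds provided $\pi_{\text{ave}}$ is taken to be a zero-distance match of maximal size, which I would either impose as a tie-breaking convention or reduce to the strictly positive case. Away from this boundary situation the argument rests only on feasibility of both solutions within the common constraint set and the two optimality conditions, and is entirely elementary.
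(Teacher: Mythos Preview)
Your proposal is correct and follows essentially the same route as the paper: the invariance claims are handled by direct substitution (the paper cites its Figure~\ref{fig:toyExample2} as the concrete counterexample you propose to construct), and the cardinality inequality is obtained from the same ratio identity $|\Pi|=D_{\text{tot}}(\pi)/D_{\text{ave}}(\pi)$ combined with the two optimality conditions. Your treatment is in fact slightly more careful than the paper's, which performs the division without commenting on the degenerate case $A_{\text{ave}}=0$.
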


To illustrate the importance of the translation invariance in Proposition \ref{prop:optimizationTarget}, we reconsider the previous example. 
As demonstrated in Figure \ref{fig:toyExample2}, we further assume that distances between units in the same cluster are $\Delta$, while those between units across clusters are $\Delta + \delta$.
One practical motivating distance is the semi-oracle distance $d_{t_i, c_j} = (Y_{t_i}(0) - Y_{c_j}(0))^2$, where
the expectation of the semi-oracle distance equals $2 \sigma^2$ for within-cluster pairs and $2 \sigma^2 + (\mu_2 - \mu_1)^2$ for across-cluster pairs. 
As the noise magnitude increases, the distance shifts up.
Another motivating distance is the covariate distance $d_{t_i, c_j} = \|X_{t_i} - X_{c_j}\|_2^2$. 
Suppose that the baseline function only depends on the first covariate and units are clustered according to $x_1$, then the distance is $ \sum_{k=2}^p(x_{k,t_i} - x_{k,c_j})^2$ for within-cluster pairs and $\sum_{k=2}^p(x_{k,t_i} - x_{k,c_j})^2 + (x_{1,t_i} - x_{1,c_j})^2$ for across-cluster pairs. 
As the dimension of covariates $p$ grows, the covariate distance is contaminated by the nuisance covariates. 

As demonstrated in Figure \ref{fig:toyExample2}, there are two match candidates : in panel $(a)$, there is one across-cluster pair, the total distance is $3 \Delta + \delta$ and the average distance is $\Delta + \delta/3$; in panel $(b)$, there is no across-cluster pair, the total distance is $4 \Delta$ and the average distance is $\Delta$. 
The average distance minimization always prefers the more favorable match with no across-cluster pairs in panel (b), while the total distance minimization prefers the match with unfavorable across-cluster pairs in panel (a) if $\Delta > \delta$. 
The translation invariance makes the average distance minimization robust to distance inflations.

To explain the benefit of (2) in Proposition \ref{prop:optimizationTarget}, if $D_{\text{ave}}(\pi)$ is relevant to $\overline{b_{\pi}^2}$, the average distance minimization reduces the bias and variance of the validation error estimator according to Proposition \ref{prop:validationError}.
Besides, a larger number of pairs constructed in the average distance minimization further reduces the variance of the validation error estimator. 

\begin{figure}[]
\centering
\begin{tikzpicture}[scale=1.6,auto,swap]
    \foreach \pos /\name in {{(-0.6,0)}/G_1,{(-0.6,1.5)}/G_2}
        \node[vertexInvisible](\name) at \pos{$\name$};
    \foreach \pos /\name in {{(0,0)}/t_3,{(0,1)}/t_2,{(0,2)}/t_1}
        \node[vertexTreated](\name) at \pos{$\name$};
    \foreach \pos /\name in {{(1.5,0)}/c_3,{(1.5,1)}/c_2, {(1.5,2)}/c_1}
        \node[vertexControl](\name) at \pos{$\name$};
    \node[draw = white, fill = white, fill opacity = 0, edge = white, draw opacity = 0, text opacity = 1] at (0.4,-0.05){};
    \node[draw = white, fill = white, fill opacity = 0, edge = white, draw opacity = 0, text opacity = 1] at (0.45,1){};
    \node[draw = white, fill = white, fill opacity = 0, edge = white, draw opacity = 0, text opacity = 1] at (0.4,2.1){};
    \node[draw = white, fill = white, fill opacity = 0, edge = white, draw opacity = 0, text opacity = 1] at (0.75,2.2) {$\Delta$};
    \node[draw = white, fill = white, fill opacity = 0, edge = white, draw opacity = 0, text opacity = 1] at (0.75,1.2) {$\Delta+\delta$};
    \node[draw = white, fill = white, fill opacity = 0, edge = white, draw opacity = 0, text opacity = 1] at (0.75,0.2) {$\Delta$};
    \foreach \source /\dest  in {t_1/c_1,t_3/c_3} 
        \path[arrow] (\source) -- node[weight] {} (\dest);
    \foreach \source /\dest  in {t_2/c_2} 
    \path[arrowRemove] (\source) -- node[weight] {} (\dest);
    \draw [rounded corners,fill opacity = 0, dashed, draw = black] (-0.3,-0.3)--(-0.3,0.2)--(1.8,1.6)--(1.8,-0.3)--cycle;
    \draw [rounded corners,fill opacity = 0, dashed, draw = black] (-0.3,2.3)--(-0.3,0.4)--(1.8,1.8)--(1.8,2.3)--cycle;
    \node[draw,text width=3cm, draw opacity = 0] at (0.75,-0.75) {\textit{(a) across-cluster}};
    \node[draw,text width=3cm, draw opacity = 0] at (1.15,-1) {\textit{pair $(t_2, c_2)$}};
    \foreach \pos /\name in {{(3,0)}/t_3,{(3,1)}/t_2,{(3,2)}/t_1}
        \node[vertexTreated](\name) at \pos{$\name$};
    \foreach \pos /\name in {{(4.5,0)}/c_3,{(4.5,1)}/c_2, {(4.5,2)}/c_1}
        \node[vertexControl](\name) at \pos{$\name$};
    \node[draw = white, fill = white, fill opacity = 0, edge = white, draw opacity = 0, text opacity = 1] at (0.4,-0.05){};
    \node[draw = white, fill = white, fill opacity = 0, edge = white, draw opacity = 0, text opacity = 1] at (0.45,1){};
    \node[draw = white, fill = white, fill opacity = 0, edge = white, draw opacity = 0, text opacity = 1] at (0.4,2.1){};
    \node[draw = white, fill = white, fill opacity = 0, edge = white, draw opacity = 0, text opacity = 1] at (3.75,2.2) {$\Delta$};
    \node[draw = white, fill = white, fill opacity = 0, edge = white, draw opacity = 0, text opacity = 1] at (3.75,1.3) {$\Delta$};
    \node[draw = white, fill = white, fill opacity = 0, edge = white, draw opacity = 0, text opacity = 1] at (3.75,0.7) {$\Delta$};
    \node[draw = white, fill = white, fill opacity = 0, edge = white, draw opacity = 0, text opacity = 1] at (3.75,0.2) {$\Delta$};
    \foreach \source /\dest  in {t_1/c_1,t_2/c_1,t_3/c_2,t_3/c_3} 
        \path[arrow] (\source) -- node[weight] {} (\dest);
    \draw [rounded corners,fill opacity = 0, dashed, draw = black] (2.7,-0.3)--(2.7,0.2)--(4.8,1.6)--(4.8,-0.3)--cycle;
    \draw [rounded corners,fill opacity = 0, dashed, draw = black] (2.7,2.3)--(2.7,0.4)--(4.8,1.8)--(4.8,2.3)--cycle;
    \node[draw,text width=3cm,draw opacity = 0] at (3.75,-0.75) {\textit{(b) no across-}}; 
    \node[draw,text width=3cm,draw opacity = 0] at (4.15,-1) {\textit{cluster pairs}}; 
\end{tikzpicture}
\caption{\textit{Comparison of the average distance minimization and the total distance minimization (continued from the example in Figure \ref{fig:toyExample}). 
Distances between units in the same cluster and across clusters are $\Delta$ and $\Delta + \delta$ respectively.
In panel $(a)$, there is one across-cluster pair, the total distance is $3 \Delta + \delta$ and the average distance is $\Delta + \delta/3$; in panel $(b)$, there is no across-cluster pair, the total distance is $4 \Delta$ and the average distance is $\Delta$. 
The average distance minimization always prefers the match in panel (b), while the total distance minimization prefers the match in panel (a) if $\Delta > \delta$. }}
\label{fig:toyExample2}
\end{figure}
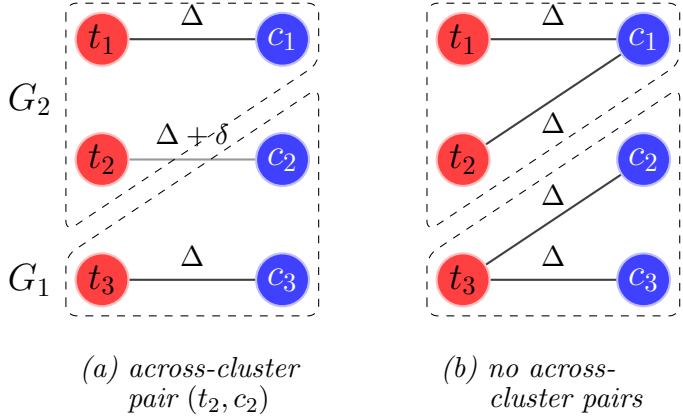

\subsubsection{Computation}
In general, there are two major approaches to solve a matching problem. 
The first approach casts the matching problem as linear programming, then applies extensive optimization tools therein. 
The objective function of the total distance minimization is linear, and is approachable via linear programming. 
However, the objective function of the average distance minimization is non-linear, thus algorithms for linear programming can not be directly applied.
The second approach formulates the matching problem as a minimum-cost flow problem \cite{rosenbaum1989optimal}. 
Standard minimum-cost flow problem requires to input the flow value, or equivalently the total number of pairs in the match. 
Unfortunately, the flow value is not directly available in the average distance minimization.

We propose an algorithm for the average distance minimization derived from \cite{chen1995minimal}. Explicitly, we search for the optimal flow value via binary search, and in each sub-routine we solve a minimum-cost flow problem. 
The algorithm is of the same time complexity as solving one minimum-cost flow problem up to logarithmic factor of the maximal number of allowed pairs.
Typically, the pair matching in \cite{rosenbaum1989optimal}, which minimizes the total distance and enforces each treated units to be matched exactly once, is of time complexity $O(n^3)$, and the average distance minimization takes $O\left(n^3 \log(n(M_t + M_c)\right)$.

Finally, we summarize the assessment approach with a hold-out validation dataset in Algorithm \ref{algo:holdOut}.
\begin{algorithm}[]
	 \DontPrintSemicolon  
	 \SetAlgoLined
	  \BlankLine
	\caption{Assessment approach with hold-out dataset	\label{algo:holdOut}}
	\textbf{Input:} HTE estimator $\hat{\tau}$, validation set $\{(X_i,W_i,Y_i)\}$, multiplicity parameters $M_t$, $M_c$, $m_t$, $m_c$. \\
	(1) Build a random forest $\{T_l\}_{1\le l \le m}$ with $m$ trees on the control group of the validation dataset. 
	Compute the proximity score distance for each pair of treated unit $t_i$ and control unit $c_j$ as 
	\begin{align*}
	    d_{t_i, c_j} = \sum_{l=1}^m \1_{\left\{T_l(X_{t_i}) \neq T_l(X_{c_j})\right\}},
	\end{align*} 
	where $T_l(x)$ denotes the terminal node of tree $T_l$ that a unit with covariate $x$ falls into.
	\\
	(2) Solve the average distance minimization problem with the distance $d_{t_i, c_j}$, multiplicity constraints $M_t$, $M_c$, $m_t$, $m_c$ and obtain match $\pi$.
	\\
	(3) Compute the validation error estimator of match $\pi$ in Eq.\eqref{eq:validationErrorEstimator} 
	\begin{align*}  
	   \widehat{\text{error}_\pi} = \frac{1}{|\Pi|}\sum_{(t_i, c_j) \in \Pi} \left(Y_{t_i} - Y_{c_j} - \hat{\tau}\left(X_{t_i}\right)\right)^2
	\end{align*}
	and output $\widehat{\text{error}_\pi}$.
\end{algorithm}

\section{Assessment with cross-validation}\label{sec:crossValidation}

In practice, hold-out datasets may be costly. 
A popular validation paradigm that uses the whole dataset for training while providing a reasonably good evaluation of the estimation performance is cross-validation. 
In this section we discuss how to conduct the assessment approach under the framework of cross-validation.

The standard cross-validation consists of two steps: first, split the data into several folds randomly equally; second, train on all but one fold, conduct validation on the left-out fold, and repeat this for each fold.
Naively integrating the assessment approach and the standard cross-validation raises the issue: the former splitting hurts the later matching.
Consider the most favorable case where there are natural pair-structures in samples, by splitting first, we may assign two naturally paired units to different folds, and thus miss the perfect match.

\begin{figure}[]
\centering
\begin{tikzpicture}[scale=1.6,auto,swap]
    \foreach \pos /\name in {{(0,0)}/t_3,{(0,1)}/t_2,{(0,2)}/t_1}
        \node[vertexTreated](\name) at \pos{$\name$};
    \foreach \pos /\name in {{(1.5,0)}/c_3,{(1.5,1)}/c_2, {(1.5,2)}/c_1}
        \node[vertexControl](\name) at \pos{$\name$};
    \node[draw = white, fill = white, fill opacity = 0, edge = white, draw opacity = 0, text opacity = 1] at (0.4,-0.05){};
    \node[draw = white, fill = white, fill opacity = 0, edge = white, draw opacity = 0, text opacity = 1] at (0.75,0.63) {$1$};
    \node[draw = white, fill = white, fill opacity = 0, edge = white, draw opacity = 0, text opacity = 1] at (0.75,1.63) {$1$};
    \node[draw = white, fill = white, fill opacity = 0, edge = white, draw opacity = 0, text opacity = 1] at (0.75,2.13) {$4$};
    \node[draw = white, fill = white, fill opacity = 0, edge = white, draw opacity = 0, text opacity = 1] at (0.75,1.13) {$2$};
    \node[draw = white, fill = white, fill opacity = 0, edge = white, draw opacity = 0, text opacity = 1] at (0.75,0.13) {$4$};
    \foreach \source /\dest  in {t_1/c_1,c_1/t_2, t_2/c_2, c_2/t_3, t_3/c_3}
        \path[arrow] (\source) -- node[weight] {} (\dest);
    \node[draw,text width=3.5cm, draw opacity = 0] at (0.75,-0.75) {\textit{(a) before pruning}};
    \foreach \pos /\name in {{(3,0)}/t_3,{(3,1)}/t_2,{(3,2)}/t_1}
        \node[vertexTreated](\name) at \pos{$\name$};
    \foreach \pos /\name in {{(4.5,0)}/c_3,{(4.5,1)}/c_2, {(4.5,2)}/c_1}
        \node[vertexControl](\name) at \pos{$\name$};
    \node[draw = white, fill = white, fill opacity = 0, edge = white, draw opacity = 0, text opacity = 1] at (0.4,-0.05){};
    \node[draw = white, fill = white, fill opacity = 0, edge = white, draw opacity = 0, text opacity = 1] at (0.45,1){};
    \node[draw = white, fill = white, fill opacity = 0, edge = white, draw opacity = 0, text opacity = 1] at (3.75,1.63){};
    \node[draw = white, fill = white, fill opacity = 0, edge = white, draw opacity = 0, text opacity = 1] at (3.75,2.13) {$4$};
    \node[draw = white, fill = white, fill opacity = 0, edge = white, draw opacity = 0, text opacity = 1] at (3.75,1.63) {$1$};
    \node[draw = white, fill = white, fill opacity = 0, edge = white, draw opacity = 0, text opacity = 1] at (3.75,0.63) {$1$};
    \node[draw = white, fill = white, fill opacity = 0, edge = white, draw opacity = 0, text opacity = 1] at (3.75,0.13) {$4$};
    \foreach \source /\dest  in {t_1/c_1,t_2/c_1,t_3/c_2,t_3/c_3} 
        \path[arrow] (\source) -- node[weight] {} (\dest);
    \foreach \source /\dest  in {t_2/c_2} 
        \path[arrowRemove] (\source) -- node[weight] {} (\dest);
    \draw [rounded corners,fill opacity = 0, dashed, draw = black] (3.5,0.7)--(4,1.3)--cycle;
    \node[draw,text width=3.5 cm,draw opacity = 0] at (3.75,-0.75) {\textit{(b) after pruning}}; 
\end{tikzpicture}
\caption{\textit{Example of pruning. 
In panel (a), the graph forms a chain of treated units and control units alternately.
There are three removable edges $(t_2, c_1)$, $(t_2, c_2)$, $(t_3, c_2)$. 
We pick the removable edge with the maximal distance, i.e. $(t_2, c_2)$, eliminate the edge and obtain panel (b). 
After pruning $(t_2, c_2)$, edges $(t_2, c_1)$ and $(t_3, c_2)$ are no longer removable, and the set of removable edges is empty. 
Therefore, we stop pruning.
In the pruned match, units can be split into two connected subgroups: $\{t_1, t_2, c_1\}$ and $\{t_3, c_2, c_3\}$. 
The connected subgroups either consist of one treated unit and multiple control units, or vice versa.}}
\label{fig:chain}
\end{figure}
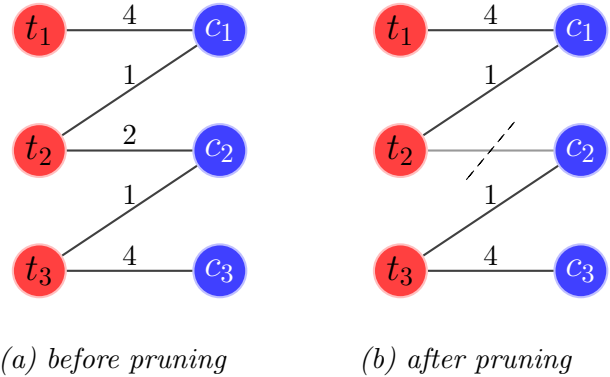

To tackle this problem, we propose to do matching prior to splitting, short as match-then-split.
Particularly, on the whole dataset, we obtain proximity score distances and solve the average distance minimization to obtain the optimal match.
We next split the samples into folds preserving the pair-structures, in other words, we avoid assigning matched units to different folds.
Applying the match-then-split principle to the aforementioned example with perfect pairs, we first match each unit with its identical copy, and then randomly split the pairs into folds without breaking the matched units apart.

A natural concern of the match-then-split principle is data snooping. 
However, notice that the distance metric for matching is obtained solely on the control group data and the treatment group is not touched, the one-sided data provides no information for the differences between the two sides. 
Therefore, splitting after matching is blind to the validation target and valid. 

A difficulty arises for data splitting in order to keep matched units in the same fold. 
We represent a match by a undirected graph where each node represents a unit, and there is an edge between two nodes if and only if the two units are matched. 
The pair-preserving constraint implies that connected components should stay together.
Since each unit is allowed to be matched multiple times, there may exist large connected components as depicted in panel (a) of Figure \ref{fig:chain}.
In the extremist scenario, the graph may be connected itself, and splitting without breaking pairs is impossible.

To enable proper splitting, we modify the average distance minimization.
Beyond the multiplicity constraints \eqref{eq:groupControl}, \eqref{eq:groupTreatment}, we further restrict the maximal path length of the graph to be at most three.
As a result, there are only two possible types of connected components: (1) one treated unit with multiple matched control units; (2) one control unit with multiple matched treated units.
The maximal size of the connected components are upper bounded by $1 + \max\{M_t, M_c\}$, which is usually small.
In this way, we can assign the connected components randomly into folds without destroying pair-structures.
We remark that the extra constraint is also adopted in full matching \cite{rosenbaum1991characterization}. 

The new constraint poses an extra challenge to computation.
In full matching where the total distance is minimized, the constraint is automatically fullfilled. 
However, this is not true for the average distance minimization. 
In fact, no known efficient network algorithm works under the path length constraint. 
As a surrogate, we propose the following heuristic pruning algorithm. 
Particularly, we start with the solution of the average distance minimization. 
We call an edge $(t_i, c_j)$ removable if the treated unit $t_i$ is matched to more than one control units, and the control unit $c_j$ is matched to more than one treated units. 
The new constraint is equivalent to the condition that there are no removable edges in the graph. 
We iteratively prune the highest cost removable edge until the set of removable edges is empty. 
See Figure \ref{fig:chain} for an example.
The algorithm is summarized in Algorithm \ref{algo:prune}. 

\begin{algorithm}[]
\caption{Pruning}
\label{algo:prune}
\begin{algorithmic}
\State Input: distance $d$, match $\Pi$.
\State Find the set of removable edges, i.e. the edges whose vertices are both connected to more than one vertex,
\begin{align}
\label{eq:removable}
    \calA ：= \left\{e_{t_i, c_j} \in \Pi: \sum_{c_k} \1_{\{(t_i,c_k) \in \Pi\}} \ge 2, ~\sum_{t_k} \1_{\{(t_k,c_j) \in \Pi\}} \ge 2 \right\}.
\end{align}
\While {$A \neq \emptyset$}
{\State (1) Find the removable edge with the maximal distance
\begin{align*}
    e_{t_i, c_j}^- = \argmax_{e_{t_k, c_l} \in \calA} d_{t_k, c_l}. 
\end{align*}
\State (2) Prune edge $e_{t_i, c_j}^-$: $\Pi \leftarrow \Pi \slash \{e_{t_i, c_j}^-\}$. 
\State (3) Update the set of removable edges $\calA$.}
\State Output the pruned match $\Pi$.
\end{algorithmic}
\end{algorithm}

We discuss properties of the pruned match.
First, pairs after pruning are a subset of the set of matched pairs from the average distance minimization, thus multiplicity constrains \eqref{eq:groupControl}, \eqref{eq:groupTreatment} are satisfied. 
Second, if the match without the path-length constraint is able to avoid low-quality pairs, the pruned match will automatically keep away from those pairs by choosing from existed pairs.
Third, by eliminating the removable pair with the maximal distance each time, we are heading towards the optimal solution with the path-length constraint greedily.

\section{Simulation} \label{sec:simulation}
\subsection{Simulating from model}
In this section, we compare various validation methods under the cross-validation framework on the synthetic data generated from model \eqref{eq:model}.
We vary four ingredients of a validation method:
\begin{enumerate}
    \item \textit{Target of comparison}. 
    We consider two targets of comparisons:
    (1) we obtain estimators of the HTE and the control group mean function (or equivalently estimators of the treatment and  control group mean functions) in training, and compare the estimators to the responses in validation;
    (2) we match treated units and control units in validation and compare the HTE estimators obtained in training to the differences between the responses of the matched pairs.
    \item \textit{Matching distance}. 
    We compare the proximity score distance, and the Mahalanobis distance of covariates $(X_{t_i} - X_{c_j})^\top \Sigma^{-1} (X_{t_i} - X_{c_j})$, where $\Sigma$ denotes the covariance matrix of the covariates. 
    \item \textit{Matching structure}. 
    We compare the average distance minimization with pruning and the total distance minimization. 
    The total distance minimization is available in the R package \textit{optmatch}.
    \item \textit{Split or match first}. We compare the match-then-split and the split-then-match discussed in Section \ref{sec:crossValidation}. 
\end{enumerate}
Based on the four ingredients, we consider the five validation methods in Table \ref{tab:method}. 
\begin{table}[]
\centering
    \begin{tabular}{c|cccc}
    \hline
       method &  target of &  matching  &   matching &  split or \\
       abbreviation &  comparison &  distance &  structure &  match first\\\hline
      prd &  $\quad$ response $\quad$ &   - &  - &   -\\ 
     cvr &   HTE &   covariate dist. &  $~$ average dist. $~$ &   $\qquad$ match $\qquad$\\ 
     full &   HTE &   prox. score dist. &   total dist. &   match \\
      S-M &   HTE &   prox. score dist. &   average dist. &   split\\
      $\qquad$ combo $\qquad$ &   HTE &   prox. score dist. &   average dist. &   match\\ 
     \hline
    \end{tabular}
    \caption{\textit{Summary of the features of validation methods. The prd method compares the estimators of treatment or control group mean function to the responses. The other four methods contrast the HTE estimator with the differences between the responses of matched pairs. Explicitly, the cvr method considers the Mahalanobis distance of covariates; the full method optimizes the total distance; the S-M method first splits data, then constructs pairs within each fold separately; the combo method considers the proximity score distance, minimizes the average distance and obeys the match-then-split principle. }}
    \label{tab:method}
\end{table}

As for data generation, we consider linear HTE $\tau(x) = x^\top\beta$ where $x$ includes the intercept.
We vary four critical factors affecting the performance of the aforementioned validation methods:
\begin{enumerate}
    \item \textit{Control group mean function}. We consider the control group mean function $\mu(x) =  x^\top\alpha + \delta \cdot |x_1|$. When $\delta \neq 0$, $\mu(x)$ is not linear in $x$. 
    \item \textit{Dimension of covariates}. We set the dimension of covariates $p \in \{10, 20\}$. 
    \item \textit{Propensity score}. We consider the constant propensity score $e(x) = 0.5$, and the covariate-dependent propensity score $e(x) = \frac{e^{x^\top \theta}}{1 + e^{x^\top\theta}}$. In particular, we set $e(x)$ and $\tau(x)$ to be positively correlated, which agrees with the fact that the units benefit more from the treatment are more likely to be treated. 
    \item \textit{Number of folds}. We set the number of folds $k \in \{10,25\}$.
\end{enumerate}
Based on the four factors, we consider the following five simulation settings in Table \ref{tab:scenario}.
Moreover, without further specification we consider the sample size $n = 200$, and covariates drawn i.i.d. uniformly on $[-1,1]$.
For the HTE and the linear part of the control group mean function, we restrict at least half of the coefficients to be zero. 
We control the signal noise ratio $\var((W-e(X))\tau(X))/\var(\varepsilon)$ to be less than $1$. 
We repeat each setting $200$ times and aggregate the results. 

\begin{table}[H]
    \centering
    \begin{tabular}{c|cccc}
    \hline
     simulation & nonlinear-  &dimension of    & propensity & number of \\
     setting & arity $(\delta)$  & covariates $(p)$ &  score $(e(x))$ &folds $(k)$\\\hline
    \RNum{1} & $0$ & $10$ & $0.5$ & $10$ \\ 
    \RNum{2} & $-2$ & $10$ & $0.5$ & $10$ \\ 
    \RNum{3} & $0$ & $20$ & $0.5$ & $10$ \\ 
    \RNum{4} & $0$ & $10$ & $\frac{e^{2x_1}}{1+e^{2x_1}}$ & $10$ \\
    \RNum{5} & $0$ & $10$ & $0.5$ & $25$  \\
    \hline
    \end{tabular}
    \caption{\textit{Summary of the features of simulation settings. Setting \RNum{1} is the default setting; setting \RNum{2} considers non-linear control mean group function; in setting \RNum{3}, the number of covariates is larger; in setting \RNum{4}, the propensity score and the treatment effect are positively correlated; in setting \RNum{5}, data are split into more folds in cross-validation.}}
    \label{tab:scenario}
\end{table}

As for the HTE estimator, we consider the following LASSO-based approach 
\begin{align}
\label{eq:LASSO}
    (\hat{\alpha},\hat{\beta})
    = \argmin_{\alpha, \beta}\frac{1}{2n} \sum_{i=1}^n (Y_i -  X_i\alpha - W_i \cdot X_i \beta)^2 + \lambda \left(\|\alpha\|_1 + \|\beta\|_1 \right),
\end{align}
with a sequence of tuning parameters $\lambda$. 
The approach is a starting point of the HTE estimation with variable selection, which works under the simple linear model, and involves only one tuning parameter. 
We expect that a good validation method should at least work well with the simple estimation approach.

As for comparison criteria, we evaluate the tuning performance of validation methods.
In particular, for each setting in Table \ref{tab:scenario}, we run each validation method in Table \ref{tab:method} under the cross-validation framework and pick the tuning parameter $\lambda_{\text{method}}$ of the minimal validation error. 
We then solve \eqref{eq:LASSO} on the whole dataset with the tuning parameter $\lambda_{\text{method}}$ and obtain estimator $\hat{\beta}_{\lambda_{\text{method}}}$. 
We denote the estimation error $\|\hat{\beta}_{\lambda_{\text{method}}} - \beta\|_2^2$ by $\text{MSE}_{\text{method}}$. 
Meanwhile, we define oracle estimation error as $\text{MSE}_{\text{oracle}} = \min_{\lambda} \|\hat{\beta}_{\lambda} - \beta\|_2^2$.
The tuning performance of a validation method is assessed by the log ratio of $\text{MSE}_{\text{method}}$ over $\text{MSE}_{\text{oracle}}$, referred to as relative MSE in the following,
\begin{align}
\label{eq:relativeMSE}
    \log\left(\frac{\text{MSE}_{\text{method}}}{\text{MSE}_{\text{oracle}}}\right).
\end{align}

We also compare the shape of validation curves.
In each trial, for the sequence of tuning parameters, we compute the oracle estimation error and validation errors. 
We then average the errors over trials and obtain error curves.
By Proposition \ref{prop:validationError},  validation error curves should be similar to the oracle error curve up to shift. 
In other words, a favorable validation error curve should be parallel to the oracle error curve, but not necessarily coincide.
To evaluate the degree of parallel, we regress validation error curves over the oracle error curve. 
We present regression coefficients, which is ideally one. 
Note that a close-to-one regression coefficient does not imply the validation error curve is similar to the oracle error curve, therefore we also present the $R^2$ of the regressions, which are proportional to the correlation between the oracle error curve and validation error curves. 

According to the simulation results in Figure \ref{fig:simulation} and Table \ref{tab:simulation},
the \textit{combo} method: a combination of the proximity score distance, the average distance minimization with pruning and the match-then-split principle, performs favorably. 
The method selects the tuning parameter corresponding to the lowest relative MSE, and produces the validation error curve the most similar to the oracle. 
More specific comparisons are discussed to investigate how the four ingredients of a validation method matter.
\begin{itemize}
     \item \textit{Target of comparison and model misspecification}. 
     Comparing setting \RNum{1} and \RNum{2}, when the control group mean function is misspecified, the \textit{prd} method comparing estimators with responses performs worse.
     The reason of adding nonlinear terms into the control group mean function instead of the HTE is as follows.
     According to domain knowledge, the control group mean function, e.g. blood pressure, is usually influenced by more factors than the HTE, e.g. the difference in blood pressure induced by a therapy, and in a more complicated way. Moreover, the HTE can be interpreted as the interaction between the treatment assignment and covariates. 
     A common hierarchical assumption of interaction is that a covariate does not go into interaction if it does not appear in the main effect. 
     \item \textit{Dimension of covariate and matching distance}. 
     Compare setting \RNum{1} and \RNum{3}, as there are more irrelevant predictors, the \textit{cvr} method is less favorable since the quality of the covariate distance deteriorates while the proximity score distance remains informative.
     \item \textit{Proximity score and matching structure}. 
     In the presence of confounding, setting \RNum{4} is similar to the example in Figure \ref{fig:toyExample2}, and the total distance minimization in the \textit{full} method performs relatively unsatisfactory. 
     \item \textit{Number of fold and splitting or matching first}. 
     Compare setting \RNum{1} and \RNum{5}, as the number of folds grows, the quality of pairs decays and the \textit{S-M} method is less attractive.
     In the extreme case where each fold is of size two: one treated unit and one control unit, an analogy to the leave-one-out cross validation, there is essentially no matching. 
 \end{itemize}
The code of the proximity score distance construction and the average distance minimization will soon be available on \textit{github}.

\begin{figure}[]
    \centering
    \begin{minipage}{5cm}
    \centering  
\includegraphics[scale = 0.23]{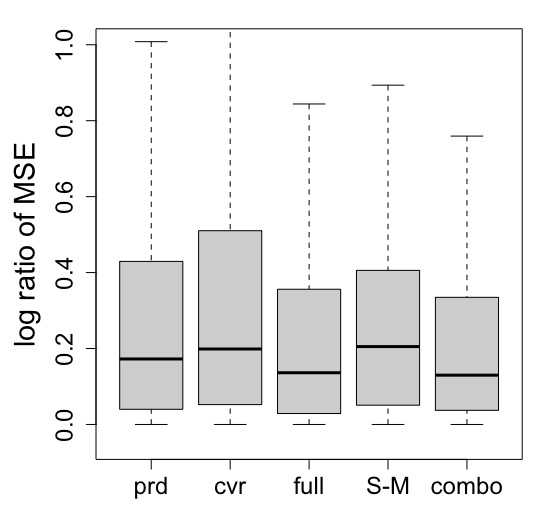}
    \subcaption*{\textit{(a) setting \RNum{1}}}
    \end{minipage}
    \begin{minipage}{5cm}
    \centering  
\includegraphics[scale = 0.23]{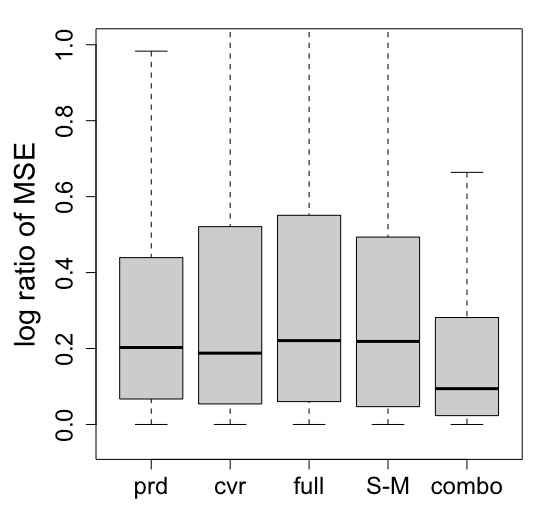}  
    \subcaption*{\textit{(b) setting \RNum{2}}}
    \end{minipage}
    \begin{minipage}{5cm}
    \centering  
\includegraphics[scale = 0.23]{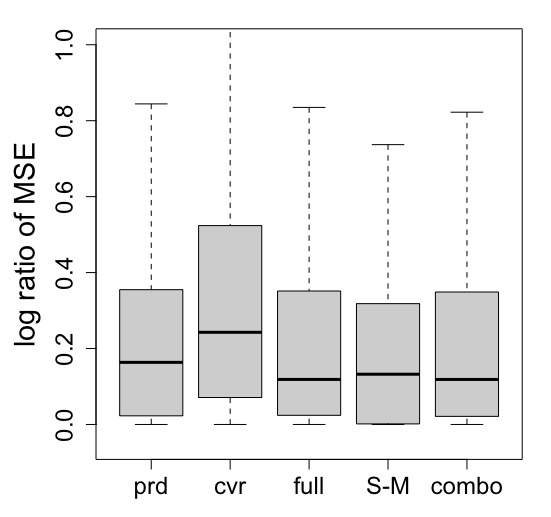}
    \subcaption*{\textit{(c) setting \RNum{3}}}
    \end{minipage}
    \begin{minipage}{5cm}
    \centering  
\includegraphics[scale = 0.23]{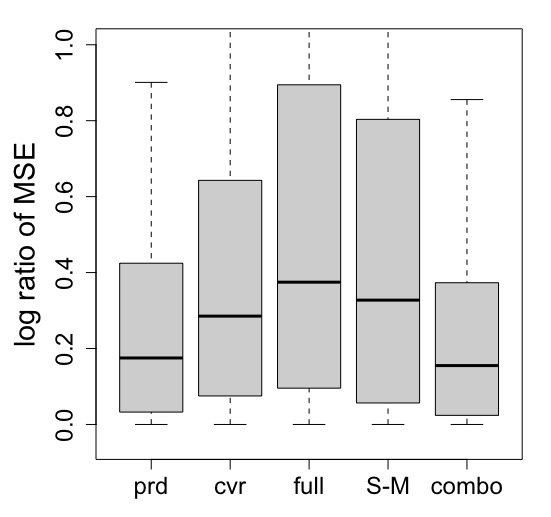}  
    \subcaption*{\textit{(d) setting \RNum{4}}}
    \end{minipage}
    \begin{minipage}{5cm}
    \centering  
\includegraphics[scale = 0.23]{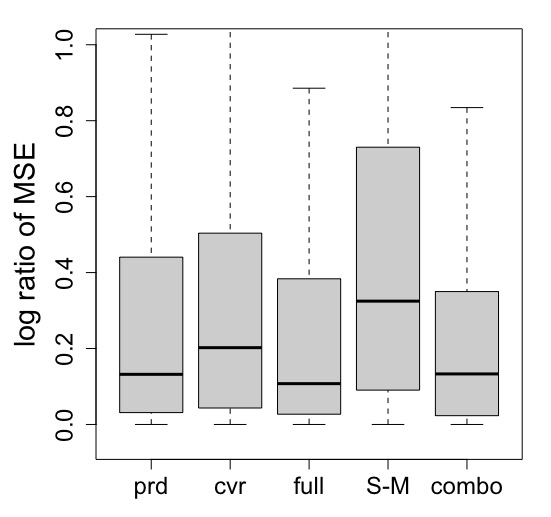}  
    \subcaption*{\textit{(e) setting \RNum{5}}}
    \end{minipage}
    \caption{\textit{Comparison of the tuning performance. 
    We display the boxplots of the log ratio of $\text{MSE}_{\text{method}}$ over $\text{MSE}_{\text{oracle}}$ of the five validation methods in Table \ref{tab:method} under the five simulation settings in Table \ref{tab:scenario}. 
    Each setting is repeated $200$ times. }}
    \label{fig:simulation}
\end{figure}

\begin{table}[]
	\centering
	\begin{tabular}{c|cccccccccc}
		\hline
		\multirow{2}{*}{method} & \multicolumn{2}{c}{I} & \multicolumn{2}{c}{II} & \multicolumn{2}{c}{III} & \multicolumn{2}{c}{IV} & \multicolumn{2}{c}{V} \\ \cline{2-11} 
		& \quad coef.  \quad   & $R^2$\qquad    & \quad coef.  \quad     & $R^2$\quad        & \quad coef.  \quad    & $R^2$\quad         & \quad coef.  \quad      & $R^2$\quad        & \quad coef.  \quad & $R^2$\quad       \\ \hline
		prd                     & $2.77$    & $0.94$    & $1.93$     & $0.80$    & $1.98$     & $0.59$     & $3.30$     & $0.93$    & $2.63$    & $0.96$    \\
		cvr                     & $1.30$    & $0.84$    & $0.24$     & $0.03$    & $0.33$     & $0.15$     & $1.30$     & $0.50$    & $1.29$    & $0.85$    \\
		full                    & $0.89$    & $1.00$    & $0.15$     & $0.01$    & $0.92$     & $0.98$     & $1.27$     & $0.35$    & $0.85$    & $1.00$    \\
		S-M                     & $1.24$    & $0.90$    & $0.59$     & $0.15$    & $1.02$     & $0.91$     & $1.22$     & $0.35$    & $1.00$    & $0.31$    \\
		combo                   & $0.85$    & $1.00$    & $1.00$     & $0.92$    & $0.88$     & $0.96$     & $1.12$     & $0.98$    & $0.83$    & $1.00$    \\ \hline
	\end{tabular}
\caption{\textit{Comparison of the validation error curves.
		We average the estimation errors of the five validation methods in Table \ref{tab:method} at each tuning parameter under the five simulation settings in Table \ref{tab:scenario} and obtain validation error curves. We regress validation error curves over the oracle error curve. We present the coefficient, and $R^2$ of each regression.}}
\label{tab:simulation}
\end{table}

\subsection{Simulating from real data}
Real data analysis in causal inference is generally difficult, since the truth is unknown. 
Without an oracle, the aforementioned criteria: relative MSE and error curve similarity are infeasible.
To make one step towards real data analysis, we use features from a real dataset instead of generating covariates from an artificial distribution. 
Based on the real features, we generate treatment assignments and potential outcomes from model \eqref{eq:model}. 
In this way, the underlying truth is still tractable and comparisons between validation methods can be carried out. 
 
We consider the dataset of the SPRINT Data Analysis Challenge \cite{sprint2016systolic} launched by the New England Journal of Medicine. 
The dataset aims to study whether a new treatment program targeting reducing systolic blood pressure (SBP) will reduce cardiovascular disease (CVD) risk. 
There are $20$ features of interest: $3$ demographic features, such as age, race; $6$ medical history features, such as daily Aspirin use, history of CVD;  $11$ lab measurements, such as body mass index (BMI), SBP. 
We remove $3$ covariates due to spuriously high correlations. We match exactly on $6$ categorical covariates, and focus on the subgroup of white male with clinical or subclinical CVD history who are currently using statin and Aspirin. 
We ignore the covariate site since no significant batch effect is observed. 
Finally, we are left with $642$ valid observations and $10$ covariates. 

In each trial, we randomly sample two thirds of the units,  generate treatment assignments and responses under the combination of setting \RNum{3} and \RNum{5}, i.e. with confounding and model misspecification. 
The HTE estimator, validation methods and comparison criteria are the same as previous. 
Results are summarized in Figure \ref{fig:realData}.

We observe that the \textit{combo} method produces the most promising result, and the \textit{prd} method is not working favorably. 
The validation error curves of the methods other than the \textit{prd} method largely resemble the oracle error curve in trend, with the \textit{combo} method producing the most similar shape. 
In contrast, the error curve of the response prediction method does not capture the first-decrease-then-increase pattern, and is decreasing in the range of the tuning parameters considered.

\begin{figure}[]
    \centering
    \begin{minipage}{7cm}
    \centering  
\includegraphics[width = 6cm, height = 6cm]{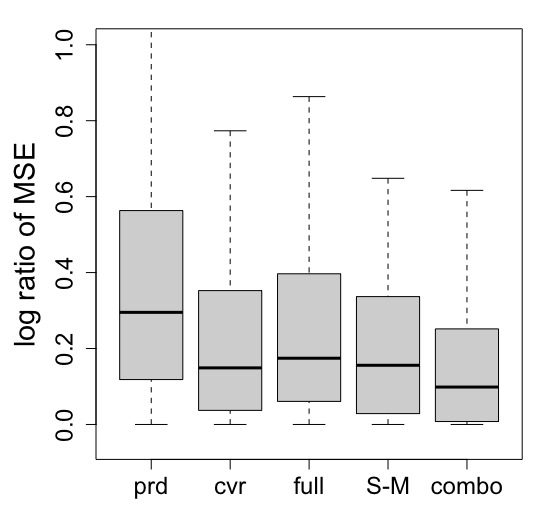}
    \subcaption*{\textit{(a) boxplot of log ratio of MSE}}
    \end{minipage}
    \begin{minipage}{7cm}
    \centering  
\includegraphics[width = 6.5cm, height = 6cm]{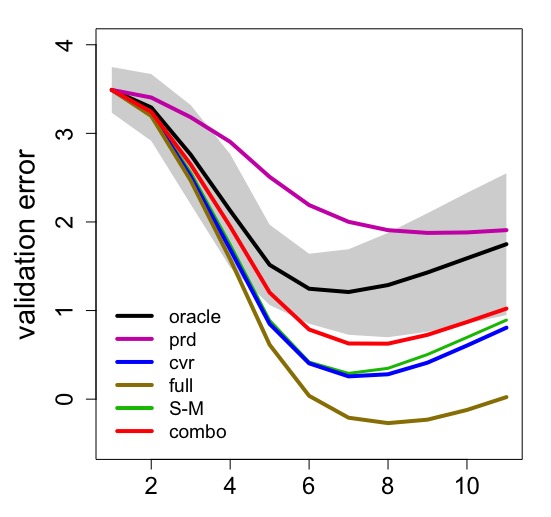}  
    \subcaption*{\textit{(b) validation error curves}}
    \end{minipage}
    \caption{\textit{Comparison of validation methods on the synthetic data generated from the dataset SPRINT. 
    In the left panel, we display the boxplot of the log ratio of $\text{MSE}_{\text{method}}$ over $\text{MSE}_{\text{oracle}}$ of the five validation methods in Table \ref{tab:method}. 
    In the right panel, we plot the averaged validation error curves of the five validation methods and the oracle error curve. 
    The x-axis plots the tuning parameter values $2^{-\frac{i}{2}}$ for $i \in \{1,2,\ldots, 11\}$, and the y-axis plots the validation errors.
    For visualization, we shift the validation error curves so that the starting points coincide.}}
    \label{fig:realData}
\end{figure}

\section{Extension to general exponential family}\label{sec:extension}
In the previous sections, we dealt with continuous responses. 
In real world, there are other types of outcomes worthwhile to study.
For instance, doctors study the effectiveness of a certain surgery by measuring whether the patients underwent the operation or not survive to a certain time spot; governments investigate the influence of a policy encouraging non-motor vehicles by comparing the times of bicycles used from automated bicycle counters before and after the policy is enforced.
In this section, we extend the aforementioned assessment approach to address multiple types of responses.

We generalize the model \eqref{eq:model} to general exponential family, which deals with a wide range of responses including binary data and count data. 
Mathematically, we assume
\begin{align}
\label{eq:generalizedModel}
\begin{split}
    Y_i | W_i, X_i &\stackrel{ind.}{\sim} \kappa(Y_i) \cdot \exp\left\{\eta(X_i, W_i) Y_i - \psi(\eta(X_i, W_i))\right\},
\end{split}
\end{align}
where $\eta(x, w)$ represents the natural parameter, $\psi(\eta)$ is the cumulant generating function, and $\kappa(y)$ is the carrying density.
More explicitly, we formulate the natural parameter as
\begin{align*}
     \eta(x, w) = 
     \begin{cases}
          \mu(x), \quad & w = 0,\\
          \nu(x), \quad & w = 1,
     \end{cases}
\end{align*}
and the treatment effect $\tau(x)$ is the difference in natural parameters of one unit under treatment and control.
The model \eqref{eq:generalizedModel} with Gaussian distribution is a sub-case of the original model \eqref{eq:model}.

Next, we generalize the validation criterion, i.e. the mean squared error in \eqref{eq:validationError}.
We first state the following result of conditional likelihood. 
\begin{proposition}
\label{prop:glm}
Consider $n$ pairs of data $\{(X_{i1}, X_{i2}, W_{i1}, W_{i2}, Y_{i1}, Y_{i2})\}$, where $\mu(X_{i1}) = \mu(X_{i2})$, $W_{i1} = 1$, $W_{i2} = 0$, and $Y_{i1}$, $Y_{i2}$ are generated independently from model \eqref{eq:generalizedModel} given $X_{ij}$, $W_{ij}$. 
Then the conditional likelihood of $\{Y_{ij}\}$ given $\{Y_{i1} + Y_{i2}\}$ does not depend on $\mu(x)$.
\end{proposition}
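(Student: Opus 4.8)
The plan is to reduce the whole conditional likelihood to a product over the $n$ pairs and to show that each single-pair conditional factor is free of $\mu$. Since the pairs are drawn independently, it suffices to analyze one pair; I drop the pair index within a pair and write $\mu_i = \mu(X_{i1}) = \mu(X_{i2})$ (equal by the matching hypothesis) and $\tau_i = \tau(X_{i1}) = \nu(X_{i1}) - \mu(X_{i1})$. Because $W_{i1}=1$ and $W_{i2}=0$, the natural parameters are $\eta(X_{i1},1) = \mu_i + \tau_i$ and $\eta(X_{i2},0) = \mu_i$, so by \eqref{eq:generalizedModel} and independence within the pair the joint density (or mass function) of $(Y_{i1}, Y_{i2})$ factors as
\begin{align*}
  \kappa(Y_{i1})\,\kappa(Y_{i2})\exp\left\{(\mu_i + \tau_i) Y_{i1} + \mu_i Y_{i2} - \psi(\mu_i+\tau_i) - \psi(\mu_i)\right\}.
\end{align*}

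First I would regroup the exponent to isolate the pair sum $S_i := Y_{i1} + Y_{i2}$, rewriting it as $\mu_i S_i + \tau_i Y_{i1} - \psi(\mu_i+\tau_i) - \psi(\mu_i)$. The key structural observation is that the entire dependence on the nuisance $\mu_i$ now enters either through the data-free constant $\psi(\mu_i+\tau_i) + \psi(\mu_i)$ or through the coefficient of $S_i$; equivalently, $S_i$ is the sufficient statistic for $\mu_i$ once $\tau_i$ is held fixed, which is exactly what the matching condition $\mu(X_{i1}) = \mu(X_{i2})$ secures.

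Next I would form the conditional law of $Y_{i1}$ given $S_i = s$, which (since $Y_{i2} = s - Y_{i1}$) determines the joint conditional law of $(Y_{i1},Y_{i2})$. Normalizing over the admissible values of $Y_{i1}$ at fixed $s$, the factor $\exp\{\mu_i s - \psi(\mu_i+\tau_i) - \psi(\mu_i)\}$ is constant in $Y_{i1}$ and therefore cancels between numerator and denominator, leaving
\begin{align*}
  p\left(Y_{i1} = y \mid S_i = s\right)
  = \frac{\kappa(y)\,\kappa(s-y)\,e^{\tau_i y}}
         {\sum_{y'} \kappa(y')\,\kappa(s-y')\,e^{\tau_i y'}},
\end{align*}
with the sum replaced by an integral in the continuous case. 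This expression involves only $\tau_i$ and the carrying density $\kappa$, and is manifestly free of $\mu_i$. Taking the product over the $n$ independent pairs yields the full conditional likelihood, which hence does not depend on $\mu(\cdot)$.

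This is essentially the classical elimination of a nuisance parameter by conditioning on its sufficient statistic, so I do not anticipate a serious obstacle. The only points needing mild care are: (i) verifying that conditioning on the collection $\{Y_{i1}+Y_{i2}\}$ factorizes across pairs, which follows from independence of the pairs; and (ii) phrasing the cancellation uniformly for discrete and continuous families, where the normalizing sum becomes an integral and one should note that the common base measure $\kappa$ is shared by treated and control units, so $\kappa(y)\kappa(s-y)$ appears symmetrically. Neither introduces real difficulty.
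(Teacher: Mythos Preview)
Your proposal is correct and follows essentially the same route as the paper: write the joint exponential-family density for a single pair, form the conditional law given the pair sum, and observe that the $\mu_i$-dependent factor $\exp\{\mu_i S_i - \psi(\mu_i+\tau_i) - \psi(\mu_i)\}$ cancels between numerator and denominator, leaving only $\tau_i$ and $\kappa$. Your framing via sufficiency of $S_i$ for $\mu_i$ and your explicit remarks on factorization across pairs and the discrete/continuous phrasing are slightly more careful than the paper's version, but the argument is the same.
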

Proposition \ref{prop:glm} implies that with pairs agreeing on control group mean function values, the conditional likelihood --- which serves as a valid criterion for the HTE estimation assessment --- can be evaluated with no information of $\mu(x)$.
For example, consider the special case of Gaussian distribution, the log conditional likelihood equals $\sum_{i=1}^n (Y_{i2} - Y_{i1})^2$ up to scale, which agrees with \eqref{eq:validationError}.

If data comes in perfectly matched pairs, the condition $\mu(X_{i1}) = \mu(X_{i2})$ is automatically satisfied.
Examples of Proposition \ref{prop:glm} with perfectly matched pairs can be found in \cite{argesti1996introduction}.
When perfectly matched data are not available, we can apply the matching based on the proximity score distance to construct pairs such that $\mu(X_i) \approx \mu(X_j)$. 
Based on the matched pairs, we compute the conditional likelihood pretending the pairs are perfectly matched, and use the conditional likelihood as the criterion for model selection.

\section{Discussion}\label{sec:discussion}
In this paper, we propose an assessment approach of HTE estimation by constructing pseudo-observations based on matching. 
For matching, we propose to minimize the average proximity score distance.
When conducting the assessment approach under the cross-validation framework, we propose to match before split.

The assessment approach can be adapted for data calibration.
Given an estimator, a standard way of calibration tune the width of a prediction band on the hold-out data according to the coverage of observations.
As for the calibration of a HTE estimator, observation coverage is irrelevant.
Instead, we can construct pseudo-observations as discussed and determine the width of the prediction band by covering a certain proportion of the pseudo-observations.

A limitation of the assessment approach lies in the computation of matching. 
Solving a matching problem exactly is generally computationally heavy. 
Consider the simplest case where each treated unit is mapped to exactly one control unit and the distance matrix is not sparse, minimizing the total/average distance takes time $O(n^3)$.
Thus, fast approximate matching algorithms are desirable to make the validation method scalable.


\bibliographystyle{alpha}
\bibliography{ref.bib}

\section{Appendix}

\begin{proof}[Proof of Proposition \ref{prop:validationError}]
We prove for bias and variance respectively.

For bias, under model \eqref{eq:model}
\begin{align}
\label{eq:errorExpt}
\begin{split}
    \EE\left[\widehat{\text{error}_{\pi}}\right] 
    &= \EE\left[\frac{1}{|\Pi|}\sum_{(t_i, c_j) \in \Pi}(Y_{t_i} - Y_{c_j} - \hat{\tau}(X_{t_i}))^2\right]\\
    &= \frac{1}{|\Pi|}\sum_{(t_i, c_j) \in \Pi}\EE\left[(\mu(X_{t_i}) - \mu(X_{c_j}) + \tau(X_{t_i}) - \hat{\tau}(X_{t_i}) + \varepsilon_{t_i} - \varepsilon_{c_j})^2\right]\\
    &= \frac{1}{|\Pi|}\sum_{(t_i, c_j) \in \Pi}(b_{t_i,c_j} + \tau(X_{t_i}) - \hat{\tau}(X_{t_i}))^2 + 2\sigma^2\\
    &= \text{error}_{\pi}  + \frac{2}{|\Pi|}\sum_{(t_i, c_j) \in \Pi} b_{t_i,c_j} \cdot (\tau(X_{t_i}) - \hat{\tau}(X_{t_i})) + \overline{b^2_{\pi}} + 2\sigma^2.
\end{split}
\end{align}
By Cauchy-Schwarz inequality, 
\begin{align}
\label{eq:CauchySchwarz}
\begin{split}
    &~\frac{1}{|\Pi|}\sum_{(t_i, c_j) \in \Pi}b_{t_i, c_j}\cdot (\hat{\tau}(X_{t_i}) - \tau(X_{t_i})) \\
    &\le \left(\frac{1}{|\Pi|}\sum_{(t_i, c_j) \in \Pi}b_{t_i, c_j}^2 \right)^{\frac{1}{2}} \left(\frac{1}{|\Pi|}\sum_{(t_i, c_j) \in \Pi}(\hat{\tau}(X_{t_i}) - \tau(X_{t_i}))^2 \right)^{\frac{1}{2}} \\
    &= (\overline{b^2_{\pi}})^{\frac{1}{2}} \cdot (\text{error}_{\pi})^{\frac{1}{2}}.
\end{split}
\end{align}
Plug \eqref{eq:CauchySchwarz} into \eqref{eq:errorExpt}, and divide both sides by $\text{error}_{\pi}$,
\begin{align*}
    \frac{\EE\left[\widehat{\text{error}_{\pi}}\right]  - 2 \sigma^2}{\text{error}_{\pi}}
    \le 1 + 2 \sqrt{\frac{\overline{b^2_{\pi}}}{\text{error}_{\pi}}} + \frac{\overline{b^2_{\pi}}}{\text{error}_{\pi}}
    = \left(1 + \sqrt{\frac{\overline{b^2_{\pi}}}{\text{error}_{\pi}}}\right)^2.    
\end{align*}
Similarly for the lower bound of the bias.

For variance, let $\delta_{t_i, c_j} = \tau(X_{t_i}) - \hat{\tau}(X_{t_i})$. Define the neighborhood of a matched pair $(t_i, c_j)$ as
\begin{align*}
    \calN_{t_i, c_j}^{\pi} = \left\{(t_i', c_j') \in \Pi: t_i = t_i' \text{ or } c_j = c_j', ~(t_i', c_j') \neq (t_i, c_j) \right\}.
\end{align*}
Since each treated unit falls into at most $M_t$ pairs, and each control unit falls into at most $M_c$ pairs,
\begin{align}
\label{eq:neighborhoodSize}    
    \left|\calN_{t_i, c_j}^{\pi}\right|
    \le M_t + M_c - 2.
\end{align}
Under model \ref{eq:model},
\begin{align*}
    \var\left(\widehat{\text{error}_{\pi}}\right)
    &= \var\left(\frac{1}{|\Pi|}\sum_{(t_i, c_j) \in \Pi}(Y_{t_i} - Y_{c_j} - \hat{\tau}(X_{t_i}))^2\right)
    = \var\left(\frac{1}{|\Pi|}\sum_{(t_i, c_j) \in \Pi}(b_{t_i,c_j} + \delta_{t_i,c_j} + \varepsilon_{t_i} - \varepsilon_{c_j})^2\right)\\
    &= \frac{1}{|\Pi|^2} 
    \left(\sum_{(t_i, c_j) \in \Pi} \var\left((b_{t_i,c_j} + \delta_{t_i,c_j} + \varepsilon_{t_i} - \varepsilon_{c_j})^2\right)\right.\\
    & + \left.\sum_{(t_i, c_j) \in \Pi} \sum_{(t_i', c_j') \in \calN_{t_i, c_j}^{\pi}} \cov\left((b_{t_i,c_j} + \delta_{t_i,c_j} + \varepsilon_{t_i} - \varepsilon_{c_j})^2 , (b_{t_i',c_j'} + \delta_{t_i',c_j'} + \varepsilon_{t_i'} - \varepsilon_{c_j'})^2\right)\right).
    \end{align*}
    Since $2~\cov(\xi_1, \xi_2) \le \var(\xi_1) + \var(\xi_2)$  for random variables $\xi_1$, $\xi_2$,
    \begin{align*}
    \var\left(\widehat{\text{error}_{\pi}}\right)
    &\le \frac{1}{|\Pi|^2} 
    \left(\sum_{(t_i, c_j) \in \Pi} \var\left((b_{t_i,c_j} + \delta_{t_i,c_j} + \varepsilon_{t_i} - \varepsilon_{c_j})^2\right)\right.\\
    & + \left.\frac{1}{2} \sum_{(t_i, c_j) \in \Pi} \sum_{(t_i', c_j') \in \calN_{t_i, c_j}^{\pi}} \var\left((b_{t_i,c_j} + \delta_{t_i,c_j} + \varepsilon_{t_i} - \varepsilon_{c_j})^2 \right) + \var\left((b_{t_i',c_j'} + \delta_{t_i',c_j'} + \varepsilon_{t_i'} - \varepsilon_{c_j'})^2\right)\right)\\
    \end{align*}
    By \eqref{eq:neighborhoodSize},
    \begin{align*}
    \var\left(\widehat{\text{error}_{\pi}}\right)
    &\le \frac{1}{|\Pi|^2} 
    \left(\sum_{(t_i, c_j) \in \Pi} \var\left((b_{t_i,c_j} + \delta_{t_i,c_j} + \varepsilon_{t_i} - \varepsilon_{c_j})^2\right)
    + \sum_{(t_i, c_j) \in \Pi} \left|\calN_{t_i, c_j}^{\pi}\right| \cdot  \var\left((b_{t_i,c_j} + \delta_{t_i,c_j} + \varepsilon_{t_i} - \varepsilon_{c_j})^2 \right)\right)\\
    &\le \frac{1}{|\Pi|^2} 
    \left(\sum_{(t_i, c_j) \in \Pi} (M_t + M_c - 1) \cdot \var\left((b_{t_i,c_j} + \delta_{t_i,c_j} + \varepsilon_{t_i} - \varepsilon_{c_j})^2 \right)\right).
\end{align*}
Recall that $\var(\varepsilon) = \sigma^2$, $\var(\varepsilon^2) = \kappa \sigma^4$, 
\begin{align*}
    &~~\frac{1}{|\Pi|}\sum_{(t_i, c_j) \in \Pi} \var\left((b_{t_i,c_j} + \delta_{t_i,c_j} + \varepsilon_{t_i} - \varepsilon_{c_j})^2\right) \\
    &\le \frac{2}{|\Pi|} \sum_{(t_i, c_j) \in \Pi} \var\left(\left(\varepsilon_{t_i} - \varepsilon_{c_j}\right)^2\right)  + 4 \left(b_{t_i,c_j} + \delta_{t_i,c_j}\right)^2 \var\left(\varepsilon_{t_i} - \varepsilon_{c_j}\right)  \\
    &=\frac{2}{|\Pi|} \sum_{(t_i, c_j) \in \Pi} (2 \kappa + 4) \cdot \sigma^4 + 8\cdot \left(b_{t_i,c_j} + \delta_{t_i,c_j}\right)^2 \sigma^2 \\
    &\le (4 \kappa + 8) \cdot \sigma^4 + 32  \sigma^2 \cdot \left(\overline{b^2_{\pi}} + \text{error}_{\pi}\right).
\end{align*}
\end{proof}

\begin{proof}[Proof of Proposition \ref{prop:optimizationTarget}]
For $C > 0$, 
\begin{align*}
    \sum_{(t_i, c_j) \in \Pi}
    C \cdot d_{t_i, c_j} 
    = C \cdot \sum_{(t_i, c_j) \in \Pi}
    d_{t_i, c_j}, 
\end{align*}
thus the total distance optimization is invariant to scaling. The example in Figure \ref{fig:toyExample2} implies that the total distance minimization is not invariant to translation.

For $C_1$, $C_2 > 0$, 
\begin{align*}
    \frac{1}{|\Pi|}\sum_{(t_i, c_j) \in \Pi}
    C_1 \cdot d_{t_i, c_j} + C_2
    = C_2 + \frac{C_1}{|\Pi|} \sum_{(t_i, c_j) \in \Pi}
    d_{t_i, c_j}, 
\end{align*}
thus the average distance minimization is invariant to both scaling and translation.

Let $\pi_{\text{ave}}$, $\pi_{\text{tot}}$ be the optimal solution of average distance minimization. By the optimality condition,
\begin{align*}
    \left|\Pi_{\text{ave}}\right|
    = \frac{D_{\text{tot}}(\Pi_{\text{ave}})}{D_{\text{ave}}(\Pi_{\text{ave}})} 
    \ge \frac{D_{\text{tot}}(\Pi_{\text{tot}})}{D_{\text{ave}}(\Pi_{\text{tot}})}
    = \left|\Pi_{\text{tot}}\right|.
\end{align*}
\end{proof}

\begin{proof}[Proof of Proposition \ref{prop:glm}]
Since $Y_{i1}$, $Y_{i2}$ are generated independently from model \eqref{eq:generalizedModel}, the marginal density of $Z_i = Y_{i0} + Y_{i1}$ follows 
\begin{align*}
    f_{Z_i}(z) 
    &= \int f_{Y_{i0}}(y_0) \cdot f_{Y_{i1}}(z-y_0) d y_0  \\
    &= \exp\{-\psi(\eta_1) - \psi(\eta_0)\} \int \kappa(y_0) \kappa(z - y_0) \exp\{\eta_1 y_0 + \eta_0 (z-y_0)\} d y_0 \\
    &= \exp\{-\psi(\eta_1) - \psi(\eta_0)\} \cdot \exp\{\eta_0 z \} \int \kappa(y_0) \kappa(z - y_0) \exp\{(\eta_1 - \eta_0) y_0\} d y_0.
\end{align*}
Then the conditional distribution given $Z_i$ is
\begin{align}
\label{eq:general}
\begin{split}
&f\left(Y_{i0} = y_0, Y_{i1} = z-y_0 \mid Z_i = z\right) \\
=&~ \frac{\kappa(z-y_0) \exp\{\eta_1 (z-y_0) - \psi(\eta_1)\} \cdot \kappa(y_0) \exp\{\eta_0 y_0 - \psi(\eta_0)\}}{\exp\{-\psi(\eta_1) - \psi(\eta_0)\} \cdot \exp\{\eta_0 z \} \int \kappa(y_0) \kappa(z - y_0) \exp\{(\eta_1 - \eta_0) y_0\} d y_0} \\
=&~ \frac{\kappa(z-y_0) \kappa(y_0) \exp\{(\eta_1 - \eta_0) (z-y_0) \}}{\int \kappa(y_0) \kappa(z - y_0) \exp\{(\eta_1 - \eta_0) y_0\} d y_0}.
\end{split}
\end{align}
Since $\mu(X_{i0}) = \mu(X_{i1})$,
\begin{align*}
    \eta_1 - \eta_0 
    = \mu(X_{i1}) + \tau(X_{
    i1}) - \mu(X_{i0})
    = \tau(X_{i1}),
\end{align*}
thus the conditional likelihood does not depend on $\mu(x)$.
\end{proof}




\end{document}